\newtheorem{theorem}{Theorem}[section]
\newtheorem{lemma}[theorem]{Lemma}
\newtheorem{claim}[theorem]{Claim}
\newtheorem{corollary}[theorem]{Corollary}
\newtheorem{proposition}[theorem]{Proposition}
\newtheorem{definition}[theorem]{Definition}
\newtheorem{remark}[theorem]{Remark}
\newtheorem{example}[theorem]{Example}
\crefname{definition}{definition}{definitions}
\Crefname{definition}{Definition}{Definitions}
\crefname{prop}{proposition}{propositions}
\Crefname{Prop}{Proposition}{Propositions}
\Crefname{cor}{Corollary}{Corollaries}
\crefname{lemma}{Lemma}{Lemmas}
\crefname{section}{Section}{Sections}
\crefname{subsubsubsection}{Section}{Sections}
\crefname{figure}{Fig.}{Figs.}
\crefname{table}{Table}{Tables}
\crefname{lemma}{lemma}{lemmas}
\Crefname{Lemma}{Lemma}{Lemmas}
\crefname{theorem}{Theorem}{Theorems}
\Crefname{theorem}{Theorem}{Theorems}
\crefname{algo}{Algorithm}{Algorithms}
\crefname{assumption}{assumption}{assumptions}
\Crefname{assumption}{Assumption}{Assumptions}
\crefname{example}{example}{examples}
\Crefname{example}{Example}{Examples}
\crefname{remark}{remark}{remarks}
\Crefname{remark}{Remark}{Remarks}
\crefname{claim}{claim}{claims}
\Crefname{claim}{Claim}{Claims}
\crefname{ineq}{inequality}{inequalities}
\Crefname{Ineq}{Inequality}{Inequalities}
\newcommand{\cH}{\mathcal{H}}
\newcommand{\cG}{\mathcal{G}}
\newcommand{\OPT}{\widehat{S}^\star}
\newcommand{\unif}{\mathsf{Unif}}
\newcommand{\indicator}[1]{\mathbbm{1}\left[#1\right]}
\newcommand{\actions}{\mathcal{A}}
\newcommand{\states}{\Omega}
\newcommand{\signals}{\Sigma}
\newcommand{\signal}{\sigma}
\newcommand{\feasibletuples}{\Sigma^{\dagger }}
\newcommand{\scheme}{\varphi}
\newcommand{\eps}{\varepsilon}
\newcommand{\numstates}{m}
\newcommand{\numactions}{n}
\newcommand{\ta}{\tilde{a}}
\newcommand{\tta}{\dot{a} }
\newcommand{\Sum}{\mathsf{sum}}
\newcommand{\cell}{\mathcal{C}}
\newcommand{\mubar}{\overline{\mu}}
\newcommand{\utilsender}{s}
\newcommand{\utilreceiver}{r}
\newcommand{\senderobj}{S}
\newcommand{\exputilreceiver}[2]{r({#1},{#2})}
\newcommand{\exputilsender}[2]{s({#1},{#2})}
\newcommand{\Maximize}{\operatorname*{maximize}}
\newcommand{\subjectto}{\text{subject to}}
\newcommand{\optscheme}{\varphi^{\star}}
\newcommand{\approxscheme}{\widehat{\varphi}}
\newcommand{\receiverstrategy}{\rho}
\newcommand{\approxBRset}{\mathfrak{BR}_{\delta}}
\newcommand{\robustutil}{\widehat{S}}
\newcommand{\prior}{\mu_0}
\newcommand{\varscheme}{\boldsymbol{\varphi}}
\newcommand{\varx}{\boldsymbol{x}}
\newcommand{\vareps}{\boldsymbol{\varepsilon}}
\newcommand{\varmu}{\boldsymbol{\mu}}
\newcommand{\OPTobj}{\mathsf{OPT}}
\newcommand{\approxsignals}{\widehat{\Sigma}}
\DeclareMathOperator*{\Ex}{\mathbb{E}}
\DeclareMathOperator*{\argmin}{\arg\!\min}
\DeclareMathOperator*{\argmax}{\arg\!\max}
\newcommand{\BR}{\textsf{BR}}
\newcommand{\supp}{\textsf{supp}}
\begin{document}
\title{Computational Aspects of Bayesian Persuasion under\\ Approximate Best Response}
\author{Kunhe Yang\thanks{University of California, Berkeley, \texttt{kunheyang@berkeley.edu}}\qquad
Hanrui Zhang\thanks{Google Research, \texttt{hanruiz1@cs.cmu.edu}}}

\date{\today}

\maketitle

\allowdisplaybreaks
\begin{abstract}
    We study Bayesian persuasion under approximate best response, where the receiver may choose any action that is not too much suboptimal, given their posterior belief upon receiving the signal.  We focus on the computational aspects of the problem, aiming to design algorithms that efficiently compute (almost) optimal strategies for the sender.  Despite the absence of the revelation principle --- which has been one of the most powerful tools in Bayesian persuasion --- we design polynomial-time exact algorithms for the problem when either the state space or the action space is small, as well as a quasi-polynomial-time approximation scheme (QPTAS) for the general problem.  On the negative side, we show there is no polynomial-time exact algorithm for the general problem unless $\mathsf{P} = \mathsf{NP}$.  Our results build on several new algorithmic ideas, which might be useful in other principal-agent problems where robustness is desired.
    \end{abstract}
    \section{Introduction}

Bayesian persuasion~\citep{kamenica2011bayesian} concerns the problem where a principal (the ``sender'') incentivizes a self-interested agent (the ``receiver'') to act in certain ways by selectively revealing information about the state of the world.
A commonly cited simplistic example --- which nonetheless illustrates the essence of the problem --- is that of selling apples.

\begin{example}
Suppose a buyer (the receiver) is debating whether they should buy an apple from a seller (the sender).
A priori, the buyer believes the apple (which is, say, a random apple from a large batch of apples) to be a ``good'' one with probability $1/3$, and a ``bad'' one with probability $2/3$.
Moreover, suppose the buyer derives utility $1$ from buying a good apple, and $-1$ from buying a bad one.
Then, aiming to maximize their expected utility, without any further information, the buyer should simply not buy the apple, because buying it would lead to expected utility $1 \times \frac13 + (-1) \times \frac23 = -\frac13 < 0$.
The seller, who has perfect knowledge of the quality of the apple, and wants to ``persuade'' the buyer to buy the apple, can of course simply reveal the quality of the apple.
Assuming the seller can do so in a credible way (which is a fundamental assumption in Bayesian persuasion), the buyer will buy the apple whenever it is good, which happens with probability $1/3$.

In fact, the seller can do even better by employing the following strategy: the seller promises to send a ``signal'' to the buyer regarding the quality of the apple, which can be either ``probably good'' or ``definitely bad''.
More specifically, if the apple is good, the sender signals that it is ``probably good''; if the apple is bad, the seller signals randomly, that it is ``probably good'' notwithstanding with probability $0.499$, and that it is ``definitely bad'' with probability $0.501$.
Now consider the buyer's perspective, assuming the seller does act up to their promises.
If the signal says ``definitely bad'', then the buyer should certainly not buy the apple.
If, however, the signal says ``probably good'', then the buyer faces a more interesting decision.
All the buyer knows is that the probability of (1) receiving ``probably good'' and (2) the apple is actually good, happening simultaneously, is $1/3$, and that the probability of (1) receiving ``probably good'' and (2) the apple is actually bad, happening simultaneously, is $2/3 \times 0.499 < 1/3$.
So, given the signal, the conditional probability that the apple is good is larger than $1/2$, which means the conditional expected utility of buying the apple is now positive.
In other words, the seller has persuaded the rational buyer to buy the apple whenever the signal is ``probably good'', which happens with probability $1/3 + 2/3 \times 0.499 \approx 2/3$.
This turns out to be (almost) the best that the seller can do.
\end{example}

Since its introduction by \citet{kamenica2011bayesian}, Bayesian persuasion has attracted enormous attention from both economists and computer scientists.
Indeed, a satisfactory solution to the problem is often twofold, involving an economic characterization that confines the sender's strategy space without loss of generality, and an efficient algorithm that finds the optimal strategy within this confined strategy space.
For example, in the standard (and somewhat idealized) model of Bayesian persuasion, a prominent principle in economics, namely the revelation principle, states that without loss of generality, the sender's optimal strategy is to simply recommend an action (depending on the state of the world, effectively revealing partial information thereof) for the receiver to take, and make sure it is in the receiver's best interest to always follow such a recommendation, given their posterior belief of the state of the world.
The computational problem of searching for the optimal strategy within this structured space of strategies turns out to be much easier than that of searching over the unconstrained strategy space.
Such characterize-and-solve approaches have proved extremely successful in Bayesian persuasion --- at least in settings where they apply.

The real world, unfortunately, is less precise than the idealized model.
Even in the simplistic example of selling apples, there appear to be numerous subtleties that could potentially derail the supposedly (almost) optimal strategy of the seller.
To name a few: the sender may not know exactly the receiver's prior belief, which might be different from the true distribution of the state of the world; the sender may not know exactly the receiver's utility function, which might be hard to evaluate even for the receiver themself; the device that generates randomness for the sender may be imperfect, affecting the receiver's posterior belief formed upon receiving a signal. 
In all these cases, the small inaccuracy can completely change the receiver's behavior: in the example of selling apples, the buyer would never buy the apple if they believe the apple is good a priori with probability $0.32$ instead of $1/3$, or if buying a good apple gives them utility $0.98$ instead of $1$, or if the device that generates randomness for the seller works in a way such that when the apple is bad, the seller actually signals ``probably good'' with probability $0.501$ instead of $0.499$.
{From the perspective of the sender, it would appear as if the receiver is acting in a somewhat suboptimal and unexpected way.}
As we will show later, the powerful revelation principle no longer applies in such cases, and the problem of finding the sender's optimal strategy becomes much trickier.
On top of that, things become even more complicated when there are more than $2$ possible states of the world, and/or more than $2$ actions for the receiver to choose between.
All this brings us to our main question: despite the lack of a structural characterization, is there a principled and efficient way of finding the optimal strategy for the sender that is robust against such inaccuracy?

\subsection{Our Results and Techniques}

In this paper, we study Bayesian persuasion in  a natural model that accounts for the kind of subtleties discussed above: roughly speaking, the model allows the receiver to choose any action that is ``not too much suboptimal'', given their posterior belief upon receiving the signal.
We focus our attention on computational aspects of the problem, aiming to design algorithms that efficiently compute (almost) optimal strategies.

\paragraph{Direct-revalation signaling schemes are suboptimal.}
Our first finding is negative: there exists extremely simple problem instances (with only $2$ possible actions and $3$ possible states), where any direct-revelation signaling scheme (i.e., a strategy of the sender where each possible signal is a recommendation of a single action, as discussed above) is suboptimal at least by a factor of $2$, or an additive gap of $1/2$.
This implies that the revelation principle ceases to work in our model, and the characterize-and-solve approach can no longer be employed.

\paragraph{LP formulation, and efficient algorithm with small action spaces.}
The above result highlights the need for new algorithmic ideas, and we present several of them in this paper.
The first one is a linear program (LP) formulation for optimal strategies.
The LP formulation has a similar high-level structure to the standard one for Bayesian persuasion: the variables correspond to probabilities that each signal is sent in each possible state of the world, and the constraints enforce the ``semantics'' of the signals, in terms of how the receiver is expected to respond.
However, one complication (among others) in our model is that multiple actions might be taken as the response to any fixed signal, so the semantics of a signal must be rich enough to encode the subset of actions that are possible in response to that signal (plus any additional information required to describe a signal).
In particular, this means in general, the number of possible signals is exponential in the number of actions.
As such, the LP formulation alone implies an efficient algorithm for our problem only when the number of actions is constant or logarithmic.
Nonetheless, the LP formulation serves as a building block of our further algorithmic results.

\paragraph{Efficient algorithm with small state spaces.}
Next we design an efficient algorithm when the number of states is constant.
To see how this is possible, observe that a subset of actions can each be chosen as the response to a signal, only if the posterior utilities corresponding to these actions are all close to the best possible.
In other words, if there does not exist a posterior belief given which a subset of actions are all almost optimal, then this subset itself is ``infeasible'' as a set of possible responses.  {Such a subset of actions} can never describe a signal in any strategy. 
Now the hope is to argue that the number of feasible subsets of actions is not too large.
It turns out this can be obtained as a consequence of a result in combinatorial geometry, which bounds the number of ``cells'' in a low-dimensional space cut by a number of hyperplanes.
We show that all feasible subsets induce a partition of these cells, so the same bound applies to the number of subsets too.
In fact, one can show that the number of feasible subsets of actions is $n^{O(m)}$, where $n$ and $m$ are the number of actions and that of states, respectively.

Now we know the number of relevant subsets of actions cannot be too large, but it remains a problem to find these subsets.
To this end, we make yet another geometric observation: the cells that correspond to feasible subsets of actions form a single ``connected component'', which means we can enumerate all feasible subsets by traversing this component.
More specifically, we start from any feasible subset, and try all its ``neighbors'' by swapping in or out a single possible action.
For each neighbor, we check its feasibility by solving another LP.
By repeating this procedure we can reach all feasible subsets, in time polynomial in the number of feasible subsets.
Once we have computed all feasible subsets, we simply solve the LP for optimal strategies restricted to these subsets, which is of polynomial size when the number of states $m$ is constant.
This gives us an efficient algorithm with constant-size state spaces.

\paragraph{Hardness of exact computation of the general problem.}
Knowing that the problem can be solved with small action spaces or small state spaces, it is then natural to seek an efficient algorithm that works unconditionally, without restrictions on any parameters of the problem.
We show, unfortunately, that such an algorithm does not exist unless $\mathsf{P} = \mathsf{NP}$.
We do so by reducing from an ``equally hard'' variant of the subset sum problem: given a set of $2n$ integers that sum to $0$, decide whether there are $n$ integers out of the $2n$ that also sum to $0$.
The idea of the reduction is that such a set of $n$ integers corresponds to a signal that gives the sender the highest posterior utility possible.
To ensure this, the utility functions need to exhibit delicate structures, such that a signal that corresponds to either too many or too few integers must be suboptimal.
The latter appears to be a quite ambivalent condition --- in fact, this is only possible with the kind of inaccuracy that we consider.
We believe the ideas of our reduction are potentially useful in other principal-agent problems where robustness is required.

\paragraph{Approximation in quasi-polynomial time.}
In light of the hardness result, we turn our attention to approximation algorithms.
We present a quasi-polynomial-time approximation scheme (QPTAS): for any target additive error $\eps > 0$, we give an $\eps$-approximate algorithm that runs in time quasi-polynomial in $m$ and $n$, where the time complexity may depend on $\eps$.
The idea is to cover the space of all possible posterior utility functions using small cells, and consider a representative point in each cell with a small error.
It is known that a covering of size $O(\log n / \eps^2)$ exists, which has already proven useful in other game-theoretic computational problems~\citep{althofer1994sparse,lipton2003playing,gan2023robust}.
However, one difficulty in our model is that there are certain types of errors that are unacceptable, no matter how small.
For example, fixing a strategy, the sender's utility can be discontinuous in the receiver's utility function, and any error near points of discontinuity can lead to a huge gap in the sender's utility.
Similarly, because of the robustness component in the model, a small error in the sender's utility may lead to a totally different response from the receiver.
To avoid such gaps, we allow approximation only in the sender's utility, and refrain from estimating the action that the receiver will respond with.
Instead, we consider the approximate worst-case utility of the sender, which turns out to be tractable using linear constraints involving approximate utility functions.
Combining this with the LP formulation discussed above, we have an LP of quasi-polynomial size, which can be solved in quasi-polynomial time.

    \subsection{Related Works}
\paragraph{Bayesian persuasion}
Our paper generalizes the canonical model of Bayesian persuasion for information design that was introduced by the seminal work of \citet{aumann1995repeated,kamenica2011bayesian,brocas2007influence}. 
This framework has been extended to multi-receiver settings~\citep{bergemann2016bayes,bergemann2019information} and supports many important applications, such as voting~\citep{schnakenberg2015expert,alonso2016persuading,bardhi2018modes,wang2013bayesian,arieli2019private}, 
bilateral trades~\citep{bergemann2015limits,bergemann2007information},
incentivizing exploration~\citep{mansour2020bayesian,mansour2022strategizing,kremer2014implementing}, to name a few.
We refer the readers to survey papers~\citep{candogan2020information,kamenica2019bayesian} for a comprehensive overview.

\paragraph{Computational aspects of Bayesian persuasion and information design}
Our work examines the computational aspects of Bayesian persuasion, which has gained significant attention since the seminal work of \citet{dughmi2016algorithmic,dughmi2017algorithmic} that study the algorithmic complexity of the sender's optimization tasks in several natural input models.
In the multi-receiver settings, computational challenges are more pronounced, as demonstrated by \citet{dughmi2014hardness,bhaskar2016hardness} who showed the hardness of computing the optimal signaling scheme in Bayesian zero-sum games. In addition, \citet{rubinstein2017honest} established the hardness of finding even an approximate solution, revealing that obtaining an $\epsilon$-approximate signaling scheme in two-player zero-sum games requires quasi-polynomial time.
On the positive side,
\citet{babichenko2017algorithmic} studied private persuasion with binary action space and presented an efficient approximation of the optimal signaling scheme.
\citet{xu2020tractability} addressed the tractability of optimizing signaling schemes without externalities between the receiver's payoffs.
Recently, \citet{zhou2022algorithmic} studied algorithmic information design of public and private signaling in atomic singleton congestion games, achieving efficient exact optimization of optimal signaling schemes when the number of resources is constant.
While tangentially connected to Bayesian persuasion, the work of \citet{gan2023robust} on the computational complexity of robust Stackelberg equilibrium provides useful insights into the computational challenges of strategic principal-agent interactions.

\paragraph{Robust Bayesian persuasion}
Our work also connects to the line of work on robust Bayesian persuasion that enriches the classical framework by incorporating practical complexities. 
Among the various aspects of robustness explored, our work is closest to the studies that investigate alternative behavioral assumptions about receivers.
For instance, 
\citet{de2022non} consider the setting where the receiver can make mistakes in updating their posterior beliefs, while
\citet{feng2024rationality} study the bounded rationality model through quantal response.
\citet{chen2023persuading} investigate behavioral agents including approximate best response and learning agents, but they do not focus on the computational aspects of sender's optimization problem.

Another aspect of robustness studies addresses the challenges posed by missing common priors or incomplete knowledge.
Several studies have focused on repeated persuasion problems without a common prior assumption~\citep{camara2020mechanisms,zu2021learning}, which requires the sender to learn over time by observing the state realizations.
\citet{kosterina2022persuasion} considers a setting where the sender does know the receiver's prior belief.
\citet{dworczak2022preparing} consider sender's uncertainty about the exogenous sources of information the receiver might learn from, and \citet{hu2021robust} focus on cases where the sender has limited knowledge about the receiver's private information source.
Recent works have also investigated adversarial robustness against unknown receiver types or utilities using online learning frameworks~\citep{castiglioni2020online,wu2022sequential}.
Finally, \citet{babichenko2022regret} consider robustness against unknown receiver utilities and aim to design signaling schemes that perform well on all possible receiver utility functions, thus optimizing a max-min objective. Our work takes a similar perspective but focuses on the approximate best-responding receivers with known utility.

    \section{Preliminaries}

Our model builds on the classic Bayesian persuasion model with a single sender and a single receiver. 
At a high level, the model formalizes a scenario where the sender, possessing private information about the true state, aims to influence the receiver's decision-making by strategically sending partial information about the true state according to a pre-committed signaling scheme. We start by introducing the basic setting and notations, and then introduce the robust Bayesian persuasion model that considers a receiver who acts {not exactly, but approximately in accordance with} their best interest in the decision-making process.
\subsection{Bayesian persuasion: the classical model}

\paragraph{Basic setting and notations.} 
Let $\states$ be the states of the world with $|\states|=\numstates$, and $\Delta(\states)$ be the set of all probability distributions on $\states$. For all distributions $\mu\in\Delta(\states)$, let $\supp(\mu)$ be the support of $\mu$, i.e., $\supp(\mu)\triangleq\{\omega\in\states \mid \mu(\omega)>0\}$. We use $\mu_0\in\Delta(\states)$ to denote the prior distribution over states and assume that $\mu_0$ is common knowledge between the sender and the receiver.

\paragraph{Signaling scheme.}
Let $\signals$ ($|\signals|<\infty$) be a finite set of signals. A \emph{signaling scheme} $\scheme:\states\to\signals$ is a randomized mapping from the states of the world to probability distributions over signals. In the Bayesian persuasion protocol, the sender first commits to a signaling scheme $\scheme$, then observes the true state of the world $\omega\sim\prior$, and sends signal $\signal\sim\scheme(\omega)$ to the receiver. We use $\scheme(\omega,\signal)$ to denote the probability of sending signal $\signal$ conditioning on observing $\omega$ as the true state, and $\scheme(\signal)=\sum_{\omega'\in\states}\mu_0(\omega')\scheme(\omega',\signal)$ to denote the marginal probability of a signal $\signal\in\signals$ being realized. 

Upon receiving the signal $\signal$, the receiver performs a Bayes update on $\mu_0$ using knowledge about the scheme $\pi$ to obtain a posterior belief $\mu_{\signal}\in\Delta(\states)$, i.e.,\[
    \mu_{\signal}(\omega)=\frac{\mu_0(\omega)\scheme(\omega,\signal)}{\scheme(\signal)}.
\]

In addition, algebraic manipulations based on the Bayes' rule suggest that the signaling scheme $\scheme$ can be viewed as the process of creating a distribution over posterior distributions that satisfy the Bayes plausibility condition~\citep{kamenica2011bayesian,gentzkow2016rothschild}:
\begin{align*}
    \forall\omega\in\states,\qquad
    \prior(\omega)=\sum_{\omega\in\states}\scheme(\signal)\cdot\mu_{\signal}(\omega).
    \tag{Bayes plausibility}
\end{align*}
Throughout this paper, we will frequently adopt this perspective, especially in the characterization of (robust) utilities.

\paragraph{Utilities and best response sets}
Although only the receiver can take actions, their action influences both the sender and the receiver's utility, both of which also depend on the state of the world.
Let $\actions$ be the action space of the receiver with $|\actions|=\numactions$.
We use $\utilsender:\states\times\actions\to[0,1]$ to denote the sender's utility and $\utilreceiver:\states\times\actions\to[0,1]$ to denote the receiver's utility, where both utilities are normalized to be between $0$ and $1$. Additionally, for distributions $\mu\in\Delta(\states)$, we abuse the notations and use $\utilsender(\mu,a)=\Ex_{\omega\sim\mu} \utilsender(\omega,a)$ and $\utilreceiver(\mu,a)=\Ex_{\omega\sim\mu} \utilreceiver(\omega,a)$ to denote the sender and receiver's expected utilities under the state distribution {$\mu$}. 

\paragraph{Receiver's strategies}
A receiver's strategy, denoted with $\receiverstrategy:\signals\to\actions$, is a (possibly randomized) mapping from the observed signals to actions which specifies the receiver's strategy of choosing responses.
Given a signaling scheme $\scheme$ and a receiver strategy $\receiverstrategy$, we use $\senderobj(\scheme,\receiverstrategy)$ to denote the expected sender utility, which is computed as
\[
    \senderobj(\scheme,\receiverstrategy)\triangleq 
    \sum_{\signal\in\signals}\scheme(\signal) \cdot \exputilsender{\mu_{\signal}}{\receiverstrategy(\signal)}
    =\sum_{\omega\in\states}\sum_{\signal\in\signals}\prior(\omega)\scheme(\omega,\signal) \utilsender(\omega,{\receiverstrategy(\signal)}).
\]
{The classical model of Bayesian persuasion assumes exact best response, i.e., the receiver chooses the action (ties are broken in favor of the sender) that maximizes their expected posterior utility.  In this paper, we will relax this modeling assumption.}
\subsection{Robust Bayesian persuasion: addressing approximate best responses}
Unlike the classic Bayesian persuasion model which assumes that the receiver takes the action that is optimal for them, i.e., always selecting the best action 
\[
a^\star(\mu_\signal)\triangleq\argmax_{a\in \actions}\exputilreceiver{\mu_\signal}{a^\star(\mu_\signal)}
\]
that maximizes the receiver's expected utility at belief $\mu_\signal$,\footnote{When there are multiple actions $a^\star(\mu_\signal)$ that maximizes the receiver's utility, we define $a^\star(\mu_\signal)$ to be an arbitrary maximizer {(all that matters in our definition is the receiver utility given by $a^\star(\mu_\signal)$)}.}
we assume that the expected utility under receiver's response $\rho(\signal)$ is not too suboptimal compared to the best action $a^\star(\mu_\signal)$, i.e., $\receiverstrategy(\signal)\in\BR_{\delta}(\mu_{\signal})$ where the notation $\BR_{\delta}(\cdot)$ denotes the set of $\delta$-optimal responses on the input belief:
\begin{align}
    \BR_\delta(\mu)\triangleq\left\{a\in\actions\ \left|\ 
    \exputilreceiver{\mu}{a}>
    \exputilreceiver{\mu}{a^\star(\mu)}
    -\delta\right.
    \right\}
    \label{eq:delta-BR-def}
\end{align}

\begin{definition}[$\delta$-best response receiver strategies]
    We say that a receiver's strategy $\receiverstrategy:\signals\to\actions$ is $\delta$-best response (or $\delta$-BR) to a signaling scheme $\scheme:\states\to\Delta(\signals)$ if for all $s\in\signals$, we have $\receiverstrategy(\signal)\in\BR_\delta(\mu_{\signal})$, where $\mu_{\signal}$ is the posterior distribution upon receiving signal $s$ under the signaling scheme $\scheme$. We use $\approxBRset(\scheme)$ to denote the set of all $\delta$-BR strategies, and sometimes abbreviate it to $\approxBRset$ when the signaling scheme is clear from context. 
\end{definition}

\paragraph{Robust sender utility}
In this paper, we 
adopt the max-min adversarial robustness perspective, 
aiming to offer robustness guarantees against arbitrary $\delta$-BR strategies, specifically those selected adversarially to minimize the sender's utilities.
We use the \emph{robust utility} $\robustutil_\delta$ of a signaling scheme to characterize the expected utility under the worst-case $\delta$-BR strategy:
\begin{align*}
    \robustutil_\delta(\scheme)=\min_{\receiverstrategy_\delta\in\approxBRset(\scheme)}
    \senderobj(\scheme,\receiverstrategy_\delta)
\end{align*}

\begin{remark}[worst-case $\delta$-BR strategy]
    \label[remark]{remark:worst-case-response-strategy}
    It is not hard to see that the worst-case $\delta$-BR strategy $\receiverstrategy_\delta$ that achieves \ref{eq:sender-problem} will choose the worst action that minimizes the sender's utility for each posterior distribution separately, i.e.,
\[
    \receiverstrategy_\delta(\signal)\in\argmin_{a\in\BR_\delta(\mu_{\signal})}\exputilsender{\mu_{\signal}}{a}.
\]
    Therefore, the robust utility under any signaling scheme $\scheme$ can be equivalently written as
    \[
    \robustutil_\delta(\scheme)=\sum_{\signal\in\signals}
    \scheme(\signal)\cdot\exputilsender{\mu_{\signal}}{\receiverstrategy_\delta(\signal)}=
    \sum_{\signal\in\signals}
    \scheme(\signal)\cdot
    \min_{a\in\BR_\delta(\mu_{\signal})}\exputilsender{\mu_{\signal}}{a}.
    \]
\end{remark}

The sender aims to maximize the robust utility through optimizing the signaling scheme. Equivalently, the sender faces a bi-level optimization problem of the following max-min form:
\begin{align}
    \OPT_\delta\triangleq
    \max_{\optscheme} \robustutil_{\delta}(\optscheme)
    =\max_{\optscheme} \min_{\receiverstrategy_\delta\in\approxBRset(\optscheme)} \senderobj(\optscheme,\receiverstrategy_\delta).
    \tag{sender's objective}
    \label{eq:sender-problem}
\end{align}
Solving the optimization problem in \eqref{eq:sender-problem} includes (1) finding an appropriate signal space $\signals$ that is sufficient to achieve the optimal utility and (2) optimizing for the optimal signaling scheme given the signal space. 

\begin{remark}[Strict inequality in definition of $\BR_\delta$ set]
    \label[remark]{remark:strict-ineq}
    {In \cref{eq:delta-BR-def}, we use strict inequality so that the sender's strategy space is closed and compact, and the sender's objective is well-defined.  This is a non-essential choice: one could instead define $\delta$-BR responses using weak inequality and investigate the sender's {\em supremum} robust utility, which would not change the nature of the problem.}
\end{remark}

    \section{Warmup: failure of revelation principle}
In classical Bayesian persuasion problem where receivers always best respond, an argument via the revelation principle (see, e.g., \cite[Proposition 1]{kamenica2011bayesian}) shows that it suffices to consider direct revelation schemes that give direct recommendations of which actions to play. This argument implies that {without loss of generality}, the number of signals needed is at most the number of actions.
However, in this section, we will show that when robustness enters the picture, direct revelation schemes fail even to achieve any nontrivial approximation ratios of the optimal robust utility. Indeed, similar phenomenons were observed by \citet{gan2023robust} in the context of Stackelberg games.
{As they argue, the absence of the revelation principle in their setting makes the respective problems more challenging, which is also what happens in our model.}

Below we present a simple example with 3 states and 2 actions where direct revelation schemes fail to match the optimal utility with any nontrivial approximation factor.
\begin{proposition}[Suboptimality of direct-revelation schemes]
\label[prop]{prop:direct-fails}
    There exists a {sequence of Bayesian persuasion instances} 
     with a robustness level $\delta=\Theta(1)$, such that {the following holds in the limit:} any direct-revelation scheme $\scheme$ is suboptimal, at least by a factor of $2$ or an additive gap of $\frac{1}{2}$.
    {That is, for any direct-revelation scheme $\scheme$,
    \[
    \robustutil_\delta(\scheme)\le\frac{1}{2}\OPT_\delta,\quad\text{and}\quad
    \robustutil_\delta(\scheme)\le\OPT_\delta-\frac{1}{2}.
    \]}
\end{proposition}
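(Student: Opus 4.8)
The plan is to exhibit an explicit one-parameter family of instances with $\numstates=3$ states, $\numactions=2$ actions, a fixed robustness level $\delta=\Theta(1)$, and a vanishing parameter $\eps\to0$, and to show that on this family (i) some (necessarily non-direct) signaling scheme attains robust utility $1-O(\eps)$, so $\OPT_\delta\ge1-O(\eps)$, while (ii) every direct-revelation scheme attains robust utility at most $\tfrac12+O(\eps)$. Since $\OPT_\delta\le1$ always (utilities lie in $[0,1]$), the bound $\robustutil_\delta(\scheme)\le\OPT_\delta-\tfrac12$ is the stronger of the two claimed inequalities (it implies $\robustutil_\delta(\scheme)\le\tfrac12\OPT_\delta$), so it suffices to establish (i) and (ii) and then let $\eps\to0$.

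To design the instance it is convenient to track the \emph{pointwise robust value} $\bar s(\mu):=\min_{a\in\BR_\delta(\mu)}\exputilsender{\mu}{a}$ of a posterior $\mu$, so that by \cref{remark:worst-case-response-strategy} the robust utility of a scheme $\scheme$ equals $\sum_\signal\scheme(\signal)\,\bar s(\mu_\signal)$ — an average of $\bar s$ over the induced, Bayes-plausible (mean $\mu_0$) distribution of posteriors. With two actions the simplex $\Delta(\states)$ splits into three pieces according to the affine function $f(\mu):=\exputilreceiver{\mu}{a_1}-\exputilreceiver{\mu}{a_2}$: on $\{f\ge\delta\}$ only $a_1$ is a $\delta$-best response and $\bar s(\mu)=\exputilsender{\mu}{a_1}$; on $\{f\le-\delta\}$ only $a_2$ is and $\bar s(\mu)=\exputilsender{\mu}{a_2}$; on the indifference strip $\{|f|<\delta\}$ both are, and $\bar s(\mu)=\min\{\exputilsender{\mu}{a_1},\exputilsender{\mu}{a_2}\}$. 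I would engineer the utilities so that $\bar s$ is $\ge1-O(\eps)$ on three pairwise-far regions $R_1,R_2,R_3$ — two inside the forced regions and one inside the indifference strip — whose convex hull contains $\mu_0$, while $\mu_0$ is placed (roughly as the barycenter of $R_1,R_2,R_3$) so that $\bar s(\mu_0)$ is small and, more importantly, every line through $\mu_0$ that does not enter a single $R_i$ meets only posteriors with $\bar s\le\tfrac12+O(\eps)$. The key point is that forcing $\bar s$ down to near $0$ requires a state at which the receiver is \emph{almost} indifferent between $a_1$ and $a_2$, so the adversarial $\delta$-best-responder can flip the receiver to the sender-bad action — precisely the slack created by $\delta>0$, which is why the revelation principle survives at $\delta=0$ but not here.

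Given such an instance, statement (i) is realized by the three-signal scheme whose posteriors are representative points of $R_1,R_2,R_3$ with weights chosen Bayes-plausibly to average to $\mu_0$: it has robust utility $\ge1-O(\eps)$. For (ii), let $\scheme$ be any direct-revelation scheme; since $\numactions=2$ it uses at most two signals, so its posteriors $x,y$ satisfy $\mu_0=\lambda x+(1-\lambda)y$ for some $\lambda\in[0,1]$ — they lie on a line through $\mu_0$ and straddle it — and $\robustutil_\delta(\scheme)=\lambda\,\bar s(x)+(1-\lambda)\,\bar s(y)$. If neither $x$ nor $y$ lies in any $R_i$, both values are $\le\tfrac12+O(\eps)$ and we are done. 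Otherwise, the placement of $\mu_0$ relative to $R_1,R_2,R_3$ guarantees the line $\overline{xy}$ meets at most one $R_i$, say the one containing $x$; then $\bar s(y)\le\tfrac12+O(\eps)$ and it remains to show $\lambda\le\tfrac12+O(\eps)$, which is where Bayes plausibility must be combined with the fact that $R_1,R_2,R_3$ are at constant distance from $\mu_0$: since $x\in R_i$ is bounded away from $\mu_0$, keeping the mean at $\mu_0$ forces the complementary posterior $y$ to carry a constant share of the mass rather than being squeezed arbitrarily close to $\mu_0$. I expect this last step — and, feeding into it, arranging the three value-$\approx1$ regions so that they simultaneously (a) contain $\mu_0$ in their convex hull, (b) are pairwise unlinkable by a line through $\mu_0$, (c) force $\lambda\le\tfrac12$, and (d) arise from an honest two-action, three-state utility profile — to be the main obstacle; the $3$-state/$2$-action failure of the revelation principle for robust Stackelberg equilibria in \citet{gan2023robust} should serve as a useful template. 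Combining (i) and (ii), for each instance in the family and each direct-revelation scheme $\scheme$ we get $\robustutil_\delta(\scheme)\le\tfrac12+O(\eps)\le\tfrac12\OPT_\delta+O(\eps)$ and $\robustutil_\delta(\scheme)\le\OPT_\delta-\tfrac12+O(\eps)$; letting $\eps\to0$ yields the proposition.
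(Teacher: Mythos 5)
Your plan identifies the right architecture — three states, two actions, a constant $\delta$, a vanishing parameter $\eps$, and the observation that the indifference strip $\{|f|<\delta\}$ is what breaks the revelation principle — and your reduction of the two claimed inequalities to the additive one (using $\OPT_\delta\le1$) is correct. But this is a plan, not a proof: you never write down the utilities, the prior, or the regions $R_1,R_2,R_3$, and you explicitly flag the step ``arrange the three value-$\approx1$ regions so that they simultaneously satisfy (a)--(d)'' together with showing $\lambda\le\tfrac12+O(\eps)$ as ``the main obstacle'' that you expect rather than resolve. That is precisely the content of the proposition: the claim lives or dies in the concrete construction, which is missing.

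The gap is also making you work harder than the paper does. The paper's instance takes $\delta=1$, utilities in $\{0,1\}$, and sets \emph{sender utility equal to receiver utility} (both equal $1$ iff the action index matches the state index among $\omega_0,\omega_1$, and $0$ on a third state $\omega_\perp$ that carries prior mass $\eps$). With $\delta=1$ and utilities bounded by $1$, the maximum receiver posterior utility of any non-degenerate posterior is strictly below $1$, so \emph{every} action is in $\BR_\delta$ at every mixed posterior; there is no ``forced region,'' and your three-region geometry collapses to a single observation: on any mixed $\mu$, the pointwise robust value $\bar s(\mu)$ is just $\min\{\mu(\omega_0),\mu(\omega_1)\}$. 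Full revelation with $3$ signals then gives $\OPT_\delta\ge1-\eps$, and for any two-signal scheme, Bayes plausibility alone (not a $\lambda\le\tfrac12$ geometric argument) gives $\sum_\signal\scheme(\signal)\min\{\mu_\signal(\omega_0),\mu_\signal(\omega_1)\}\le\prior(\omega_0)=\tfrac{1-\eps}{2}$ when both posteriors are mixed, with a symmetric one-line case when one posterior is degenerate. If you want to salvage your proposal, instantiate your regions as the three vertices of the simplex with that utility matrix and $\delta=1$; the conditions (a)--(d) then either hold trivially or become unnecessary, and the $\lambda\le\tfrac12$ worry disappears.
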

The claim in \Cref{prop:direct-fails} can be established using the family of robust persuasion instances constructed in \Cref{ex:direct-fails}. The main idea is to create a relatively large robustness level so that disadvantageous actions enter the receiver's approximate best response set as long as the posterior is a mixed distribution. Therefore, to achieve the optimal robust utility, the sender has to adopt a signaling scheme that deterministically discloses the true state, which requires a signaling space larger than the action space. We prove \Cref{prop:direct-fails} in \Cref{app:direct-fails}.
\begin{example}
    \label[example]{ex:direct-fails}
    Consider the following family of robust Bayesian persuasion instances parametrized by $\eps$. Every instance in this family has state space $\states=\{\omega_\perp,\omega_0,\omega_1\}$ and action space $\actions=\{a_0,a_1\}$. The prior distribution $\prior$ puts a very small probability mass on state $\omega_\perp$, and divides the rest of the probability mass evenly between $\omega_1$ and $\omega_2$. Formally, for an infinitesimal $\eps>0$, we have
    \[
    \prior(\omega_\perp)=\eps,\quad
    \prior(\omega_0)=\prior(\omega_1)=\frac{1-\eps}{2}.
    \]
    The utility functions satisfy $\utilreceiver=\utilsender$, where both the receiver obtains utility $1$ if the index of the action matches that of the state (i.e., in the case of $(\omega_1,a_1)$ and $(\omega_0,a_0)$) and $0$ otherwise (see \Cref{tab:example-direct-fails}). The approximate best response level is $\delta=1$.
    \begin{table}[!htbp]
        \centering
        \begin{tabular}{|c|c|c|}
            \hline
            \backslashbox{states ($\states$)}{actions ($\actions$)}& $a_0$ & $a_1$\\\hline
            $\omega_{\perp}$  & 0 & 0\\\hline
            $\omega_0$& 1& 0\\\hline
            $\omega_1$& 0& 1\\\hline
        \end{tabular}
        \caption{The utility function for both the sender and the receiver. If the state is $\omega_\perp$, then both players have $0$ utility. Otherwise, both players get utility $1$ if and only if the receiver's action matches the true state.}
        \label{tab:example-direct-fails}
    \end{table}
\end{example}

    \section{LP formulation and algorithm with small action state}
In this section, we present a linear program (LP) that computes the optimal robust utility $\OPT_\delta$ and the optimal robust signaling scheme $\optscheme$. Although the high-level idea is similar to the LP formulation for computing the optimal non-robust scheme in standard Bayesian persuasion~\citep{dughmi2016algorithmic,dughmi2017algorithmic}, our robust LP formulation follows significantly different semantics.

We characterize the maximum number of signals needed to achieve the optimal robust utility in \Cref{lemma:signal-space-size}. This lemma can be viewed as a generalized (and unfortunately, much less powerful due to lack of best responses) version of the revelation-principle style argument in \citep[Proposition 1]{kamenica2011bayesian} that accounts for worst-case approximate best responses.
\begin{lemma}
    \label{lemma:signal-space-size}
    There exists an optimal robust signaling scheme that is supported on at most $\numactions\cdot 2^{\numactions-1}$ signals, in which 
    each signal $\signal\in\signals$ corresponds to a unique pair of $(A,\ta) $ 
    such that $\ta=a^\star(\mu_\signal)$ and $A=\BR_\delta(\mu_\signal)$.
\end{lemma}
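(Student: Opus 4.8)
The plan is to take an arbitrary optimal robust scheme and repeatedly \emph{merge} two signals whenever they induce the same pair $(\BR_\delta(\mu_\signal),\,a^\star(\mu_\signal))$, arguing that each merge weakly increases the sender's robust utility. Since $\delta>0$, we always have $a^\star(\mu_\signal)\in\BR_\delta(\mu_\signal)$, so every realizable pair has the form $(A,\ta)$ with $\ta\in A\subseteq\actions$ and $A\neq\emptyset$; the number of such pairs is $\sum_{\emptyset\neq A\subseteq\actions}|A|=\sum_{k=1}^{\numactions}k\binom{\numactions}{k}=\numactions\cdot 2^{\numactions-1}$. Hence, once no two signals share a pair, the scheme uses at most $\numactions\cdot 2^{\numactions-1}$ signals and the map $\signal\mapsto(\BR_\delta(\mu_\signal),a^\star(\mu_\signal))$ is injective, which is exactly the claimed structure.

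The heart of the proof is the merging step. First I would discard any signal with zero marginal probability, and then suppose $\signal_1,\signal_2$ satisfy $\BR_\delta(\mu_{\signal_1})=\BR_\delta(\mu_{\signal_2})=A$ and $a^\star(\mu_{\signal_1})=a^\star(\mu_{\signal_2})=\ta$; replace them by a single signal $\signal$ with $\scheme'(\omega,\signal)=\scheme(\omega,\signal_1)+\scheme(\omega,\signal_2)$, so that $\scheme(\signal)=\scheme(\signal_1)+\scheme(\signal_2)$ and $\mu_\signal=\lambda\mu_{\signal_1}+(1-\lambda)\mu_{\signal_2}$ with $\lambda=\scheme(\signal_1)/\scheme(\signal)\in(0,1)$. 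Since $\exputilreceiver{\cdot}{a}$ is affine in the belief and $\ta$ maximizes $\exputilreceiver{\mu_{\signal_i}}{\cdot}$ for $i=1,2$, the action $\ta$ still maximizes $\exputilreceiver{\mu_\signal}{\cdot}$, so I may take $a^\star(\mu_\signal)=\ta$. Next I would prove the key inclusion $\BR_\delta(\mu_\signal)\subseteq A$: for any $b\notin A$ we have $\exputilreceiver{\mu_{\signal_i}}{\ta}-\exputilreceiver{\mu_{\signal_i}}{b}\ge\delta$ for both $i$ (this is just $b\notin\BR_\delta(\mu_{\signal_i})$), and taking the $\lambda$-convex combination gives $\exputilreceiver{\mu_\signal}{\ta}-\exputilreceiver{\mu_\signal}{b}\ge\delta$, i.e.\ $b\notin\BR_\delta(\mu_\signal)$. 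With this inclusion, and recalling from \Cref{remark:worst-case-response-strategy} that the contribution of a signal to $\robustutil_\delta$ is $\scheme(\signal)\cdot\min_{a\in\BR_\delta(\mu_\signal)}\exputilsender{\mu_\signal}{a}$, I would chain
\begin{align*}
\scheme(\signal)\cdot\min_{a\in\BR_\delta(\mu_\signal)}\exputilsender{\mu_\signal}{a}
&\ \ge\ \scheme(\signal)\cdot\min_{a\in A}\exputilsender{\mu_\signal}{a}\\
&\ \ge\ \scheme(\signal_1)\cdot\min_{a\in A}\exputilsender{\mu_{\signal_1}}{a}+\scheme(\signal_2)\cdot\min_{a\in A}\exputilsender{\mu_{\signal_2}}{a},
\end{align*}
where the first inequality is monotonicity of $\min$ under restricting to a subset of $\actions$, and the second is concavity of $\mu\mapsto\min_{a\in A}\exputilsender{\mu}{a}$ (a pointwise minimum of affine functions) together with $\exputilsender{\mu_\signal}{a}=\lambda\exputilsender{\mu_{\signal_1}}{a}+(1-\lambda)\exputilsender{\mu_{\signal_2}}{a}$. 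Since $\BR_\delta(\mu_{\signal_i})=A$, the right-hand side equals the old combined contribution of $\signal_1,\signal_2$, and all other signals are untouched, so $\robustutil_\delta(\scheme')\ge\robustutil_\delta(\scheme)$.

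To conclude, I would observe that an optimal robust scheme exists --- apply the merging reduction to restrict to schemes supported on at most $\numactions\cdot 2^{\numactions-1}$ signals, on which $\robustutil_\delta$ is upper semicontinuous (the strict inequality in the definition of $\BR_\delta$ is what guarantees this; cf.\ \Cref{remark:strict-ineq}) and compactness yields a maximizer --- and then iterate the merging step from such a scheme; each merge strictly reduces the number of signals, so the process halts at an optimal robust scheme whose signals induce pairwise distinct pairs. The step I expect to be the main obstacle is the inclusion $\BR_\delta(\mu_\signal)\subseteq A$: it genuinely needs $\mu_{\signal_1}$ and $\mu_{\signal_2}$ to share the \emph{same} optimal action $\ta$, not merely the same set $\BR_\delta$. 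If one matched only the sets, $\max_a\exputilreceiver{\mu_\signal}{a}$ could strictly exceed the $\lambda$-average of the two posterior maxima, allowing a formerly suboptimal action to slip into $\BR_\delta(\mu_\signal)$ and break the merge --- which is precisely why the bound is $\numactions\cdot 2^{\numactions-1}$ (one signal per $(A,\ta)$) and not $2^{\numactions}$.
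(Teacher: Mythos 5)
Your proof is correct and follows essentially the same route as the paper's: merge signals sharing the same pair $(\BR_\delta(\mu_\signal), a^\star(\mu_\signal))$, use affinity of the receiver's expected utility in the posterior to preserve the best response and control the $\delta$-BR set, and use concavity of the pointwise minimum of the sender's affine utilities (Jensen) to show the robust utility weakly increases. The only cosmetic differences are that you merge pairwise and iterate rather than batch-merging each equivalence class at once, and you prove only the inclusion $\BR_\delta(\mu_\signal)\subseteq A$ (which indeed suffices for the utility bound) where the paper establishes the equality $\BR_\delta(\mu_{(A,\ta)})=A$ directly.
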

\begin{proof}
    Let $\signals'=\{(A,\ta)\mid A\subseteq \actions,\ \ta\in A\}.$
    Given any signaling scheme $\scheme:\states\to\signals$,
    we use $\signals_{(A,\ta)}$ to denote the subset of signals $\signal\in\signals$ that lead to the same best response $a^\star=\ta$ and the same $\delta$-BR set $\BR_\delta(\mu_\signal)=A$, i.e.,
    \[
        \signals_{(A,\ta)}=\{\signal\in\signals\mid \signal\in \signals:\ a^\star(\mu_\signal)=\ta,\BR_\delta(\mu_\signal)=A\}.
    \]
    Consider an alternative signaling scheme $\scheme':\states\to\signals'$ that merges signals in each subset $\signals_{(A,\ta)}$:
   \[\forall(A,\ta)\in\signals',\quad \scheme'(\omega,(A,\ta))=\sum_{\signal\in\signals_{(A,\ta)}}\scheme(\omega,\signal).\]
    The subsets $\signals_{(A,\ta)}$ naturally form a partition of the original signal space $\signals$.
    
    We first claim that $a^\star(\mu_{(A,\ta)})=\ta,\BR_\delta(\mu_{(A,\ta)})=A$. The first claim follows directly from the revelation principle --- since $\ta$ is the best response to each $\signal\in \signals_{(A,\ta)}$, it must also be the best response to the merged signal. For the second claim, from the Bayes' rule, the posterior of receiving each signal $(A,\ta)$ under the newly constructed scheme $\scheme'$ satisfies \begin{align}
        \forall\omega\in\states,\quad 
        \mu_{(A,\ta)}(\omega)=\frac{\sum_{\signal\in\signals_{(A,\ta)}} \scheme(\signal)\mu_\signal(\omega)}
        {\sum_{\signal\in\signals_{(A,\ta)}} \scheme(\signal)},
        \label{eq:merge-signal}
    \end{align}
    therefore, for each $a\in A$, we have
    \[
        \exputilreceiver{\mu_{(A,\ta)}}{a}=\frac{\sum_{\signal\in\signals_{(A,\ta)}} \scheme(\signal)
        \exputilreceiver{\mu_\signal}{a}}
        {\sum_{\signal\in\signals_{(A,\ta)}} \scheme(\signal)}
        >\frac{\sum_{\signal\in\signals_{(A,\ta)}} \scheme(\signal)
        (\exputilreceiver{\mu_\signal}{\ta}-\delta)}
        {\sum_{\signal\in\signals_{(A,\ta)}} \scheme(\signal)}
        =\exputilreceiver{\mu_{(A,\ta)}}{\ta}-\delta.
    \]
    Similarly, for each $a\in\actions\setminus A$, we can conclude that\[
        \exputilreceiver{\mu_{(A,\ta)}}{a}\le \exputilreceiver{\mu_{(A,\ta)}}{\ta}-\delta,
    \]
    The two cases together established the second claim that $\BR_\delta(\mu_{(A,\ta)})=A$. Finally, for the sender's utility, we have
    \begin{align*}
        \scheme'((A,\ta))\min_{a\in A} \exputilsender{\mu_{(A,\ta)}}{a}
         =&\min_{a\in A}\sum_{\signal\in\signals_{(A,\ta)}}\scheme(\signal)\exputilsender{\mu_{\signal}}{a}
         \tag{Bayes' rule in \cref{eq:merge-signal}}\\
         \ge& \sum_{\signal\in\signals_{(A,\ta)}}
         \scheme(\signal)\min_{a\in A}\exputilsender{\mu_{\signal}}{a}.
         \tag{Jensen's inequality}
    \end{align*}
    Since the above inequalities hold for any $(A,\ta)\in\signals'$, the robust utility satisfies
    \begin{align*}
        \robustutil_\delta(\scheme')=&
        \sum_{(A,\ta)\in\signals'}\scheme'((A,\ta)) \cdot \min_{a\in A} \exputilsender{\mu_{(A,\ta)}}{a}\\
        \ge &\sum_{(A,\ta)\in\signals'}\sum_{\signal\in\signals_{(A,\ta)}}\scheme(\signal)\cdot \scheme(\signal)\min_{a\in A}\exputilsender{\mu_{\signal}}{a}
        =\robustutil_\delta(\scheme),
    \end{align*}
    where the last step follows since $\{\Sigma_{(A,\ta)}\mid (A,\ta)\in\signals'\}$ forms a partition of $\signals$.
    
    We have shown that restricting the signal space to $\signals'=\{(A,\ta)\mid A\subseteq \actions,\ \ta\in A\}$ by merging signals never decreases the robust utility. Therefore, there must exist an optimal signaling scheme supported on at most $|\signals'|=\numactions\cdot2^{\numactions-1}$ signals.
\end{proof}

According to \Cref{lemma:signal-space-size}, it suffices to consider the following signal space: \begin{align*}
    \signals=\{(A,\ta)\mid A\subseteq \actions,\ \ta\in A\},
\end{align*}
where each signal $\signal=(A,\ta)$ satisfies both $a^\star(\mu_\signal)=\ta$ and $\BR_\delta(\mu_\signal)=A$.
Recall from \Cref{remark:worst-case-response-strategy} that the robust utility can be written as
\[
    \OPT_\delta=\max_{\scheme} \sum_{(A,\ta)\in\signals}
    \scheme((A,\ta))\cdot\min_{a\in\BR_\delta(\mu_{(A,\ta)})}\exputilsender{\mu_{(A,\ta}}{a}.
\]
Translating this max-min optimization to a constrained maximization problem, and enforcing the abovementioned semantics for each signal, we arrive at the following (non-linear) program in \Cref{fig:lp-tmp-robust}.
\begin{figure}[htbp]
    \centering
    \noindent \fbox{
    \begin{minipage}[t][\height][c]{\dimexpr\textwidth-2\fboxsep-2\fboxrule\relax}
        \centering
        \begin{align*}
            \Maximize_{\varscheme,\varx}\quad&\sum_{(A,\ta)\in\signals}
                    \varx(A,\ta)\\
                    \subjectto\quad 
                &\varscheme(\mu_{(A,\ta)})\cdot \exputilsender{\mu_{(A,\ta)}}{a}\ge \varx(A,\ta).&\forall\signal\in\signals, \forall a\in A\\
                &a^\star(\mu_\signal)=\ta &\forall (A,\ta)\in\signals\\
                &\BR_\delta(\mu_{(A,\ta)})=A&\forall (A,\ta)\in\signals
        \end{align*}  
        \end{minipage}
    }
    \caption{A program for the optimal robust signaling scheme supported on $\signals=\{(A,\ta)\}$.}
    \label{fig:lp-tmp-robust}
\end{figure}
Although the program in \Cref{fig:lp-tmp-robust} is not yet a linear program, both the first and second constraints can be equivalently written as linear constraints on the conditional probabilities $\varscheme(\omega,(A,\ta))$ that define the signaling scheme $\scheme$, using the following observation based on the Bayes' rule:
\begin{align}
    \forall\signal\in\signals,\quad
    \varscheme(\mu_{\signal})\cdot \exputilsender{\mu_{\signal}}{a}
    =\sum_{\omega\in\states}\prior(\omega)\varscheme(\omega,\signal)\cdot\utilsender(\omega,a).
    \label{eq:util-linear-in-scheme}
\end{align}
Therefore, it remains to characterize the $\delta$-BR response sets, and would be ideal if they could also be equivalently expressed as linear constraints. Unfortunately, the definition of $\delta$-BR sets involves strict-inequality constraints due to the 
issues discussed in \Cref{remark:strict-ineq}. 
In particular, for each $(A,\ta)\in\signals$, the third constraint is equivalent to \begin{align*}
    \text{(1)}\ \forall a\in A,\ \exputilsender{\mu_\signal}{a}>\exputilsender{\mu_\signal}{\ta}-\delta;\quad
    \text{(2)}\ \forall a\in\actions\setminus A,\ \exputilsender{\mu_\signal}{a}\le \exputilsender{\mu_\signal}{\ta}-\delta,
\end{align*}
where (1) involves a strict inequality and could lead to open polytopes. Due to such subtleties, we first 
consider the LP in \Cref{fig:lp-robust} which is a relaxation of the program in \Cref{fig:lp-tmp-robust} by dropping the strict-inequality constraint (1). The last two constraints in \Cref{fig:lp-tmp-robust} guarantee that variables $\varscheme(\omega,A,\ta)$ give rise to a valid signal distribution for every state $\omega\in\states$.

\begin{figure}[!htbp]
    \centering
    \noindent \fbox{
    \begin{minipage}[t][\height][c]{\dimexpr\textwidth-2\fboxsep-2\fboxrule\relax}
        \centering
        \begin{align*}
            \Maximize_{\varscheme,\varx}\quad&\sum_{(A,\ta)\in\signals}
            \varx(A,\ta)\\
            \subjectto\quad &
        \sum_{\omega\in\states} \prior(\omega) \varscheme(\omega,A,\ta) \utilsender(\omega,a)\ge 
        \varx{(A,\ta)},
        &\forall (A,\ta)\in\signals, \forall a\in A\\
        &\sum_{\omega\in\states} \prior(\omega) \varscheme(\omega,A,\ta) \left(\utilreceiver(\omega,\ta)-\utilreceiver(\omega,a)\right)\ge0, &\forall (A,\ta)\in\signals, \forall a\in A\\
        &   \sum_{\omega\in\states} \prior(\omega) \varscheme(\omega,A,\ta)\left(\utilreceiver(\omega,\ta)-\utilreceiver(\omega,a)-\delta\right)\le 0,&\forall (A,\ta)\in\signals,\forall a\in \actions\!\setminus\! A\\
        & \sum_{(A,\ta)\in\signals}\varscheme(\omega,A,\ta)=1&\forall\omega\in\states\\
        & \varscheme(\omega,A,\ta)\ge0.&\forall (A,\ta)\in\signals
        \end{align*}
        \end{minipage}
    }
    \caption{Relaxed LP for the optimal robust signaling scheme}
    \label{fig:lp-robust}
\end{figure}

Since \Cref{fig:lp-robust} is a relaxation of the original program, its optimal objective provides an upper bound for $\OPT_\delta$. 
However, as we will show later, not only does the optimal objective value exactly equal $\OPT_\delta$, but the optimal variables $\varscheme^\star$ also exactly characterize the optimal robust signaling scheme that achieves this $\OPT_\delta$.
This gives us an algorithm that efficiently computes the optimal robust signaling scheme when the number of action spaces is constant or polynomial. We formalize this in \Cref{thm:small-action-space}, and provide a proof in \Cref{app:small-action-space}. 

\begin{proposition}[Efficient algorithm for small action space]
\label[prop]{thm:small-action-space}
    The optimal robust signaling scheme can be computed by the linear program in \Cref{fig:lp-robust} with size $O(2^{\numactions}\numstates\numactions)$.
\end{proposition}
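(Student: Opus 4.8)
The plan is to show that the LP in \Cref{fig:lp-robust}, although obtained from the program in \Cref{fig:lp-tmp-robust} by discarding the strict inequality (1), has optimal value exactly $\OPT_\delta$, and that an optimal LP solution recovers an optimal robust scheme. By \Cref{lemma:signal-space-size} it is without loss of generality to work with the signal space $\signals=\{(A,\ta)\mid A\subseteq\actions,\ \ta\in A\}$ of size $\numactions\cdot 2^{\numactions-1}$; the LP then has variables $\{\varscheme(\omega,A,\ta)\}$ and $\{\varx(A,\ta)\}$ together with constraints of the types displayed, and a direct count gives size $O(2^{\numactions}\numstates\numactions)$. It remains to prove the two inequalities ``LP value $\ge\OPT_\delta$'' and ``LP value $\le\OPT_\delta$''.

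For ``LP value $\ge\OPT_\delta$'', I would plug in the optimal robust scheme $\optscheme$ guaranteed by \Cref{lemma:signal-space-size}, which is supported on $\signals$ and satisfies $a^\star(\mu_\signal)=\ta$ and $\BR_\delta(\mu_\signal)=A$ for each signal $\signal=(A,\ta)$. Setting $\varscheme(\omega,A,\ta):=\optscheme(\omega,\signal)$ and $\varx(A,\ta):=\optscheme(\signal)\cdot\min_{a\in A}\exputilsender{\mu_\signal}{a}$, feasibility is routine: the sender constraint is \eqref{eq:util-linear-in-scheme} applied to $\optscheme(\signal)\,\exputilsender{\mu_\signal}{a}\ge\varx(A,\ta)$, which holds since $\varx(A,\ta)$ is the minimum over $a\in A$; the two receiver-side constraints are exactly $\ta=a^\star(\mu_\signal)$ and $\BR_\delta(\mu_\signal)=A$ written out via \eqref{eq:util-linear-in-scheme}; and the remaining constraints say $\optscheme$ is a valid scheme. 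The objective value is $\sum_\signal\optscheme(\signal)\min_{a\in A}\exputilsender{\mu_\signal}{a}=\robustutil_\delta(\optscheme)=\OPT_\delta$ by \Cref{remark:worst-case-response-strategy}.

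The crux is ``LP value $\le\OPT_\delta$''. Given any feasible $(\varscheme,\varx)$, define a signaling scheme by $\scheme(\omega,(A,\ta)):=\varscheme(\omega,A,\ta)$; the last two constraints make $\scheme$ a valid scheme on $\signals$. Because the relaxation drops (1), a signal $\signal=(A,\ta)$ with $\scheme(\signal)>0$ need not satisfy $\BR_\delta(\mu_\signal)=A$, but the surviving receiver-side constraint still forces $\BR_\delta(\mu_\signal)\subseteq A$: dividing it by $\scheme(\signal)$ gives $\exputilreceiver{\mu_\signal}{a}\le\exputilreceiver{\mu_\signal}{\ta}-\delta\le\exputilreceiver{\mu_\signal}{a^\star(\mu_\signal)}-\delta$ for every $a\notin A$, so no such $a$ is a $\delta$-best response. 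Hence the adversarial $\delta$-best response stays inside $A$, and by \Cref{remark:worst-case-response-strategy}, the sender constraint, and \eqref{eq:util-linear-in-scheme},
\[
\scheme(\signal)\cdot\!\!\min_{a\in\BR_\delta(\mu_\signal)}\!\!\exputilsender{\mu_\signal}{a}\ \ge\ \scheme(\signal)\cdot\min_{a\in A}\exputilsender{\mu_\signal}{a}\ =\ \min_{a\in A}\sum_{\omega\in\states}\prior(\omega)\varscheme(\omega,A,\ta)\utilsender(\omega,a)\ \ge\ \varx(A,\ta).
\]
Signals with $\scheme(\signal)=0$ contribute $0$ to $\robustutil_\delta(\scheme)$ and are forced to have $\varx(A,\ta)\le0$ by the sender constraint, so summing over all signals yields $\robustutil_\delta(\scheme)\ge\sum_{(A,\ta)}\varx(A,\ta)$. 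Thus $\OPT_\delta\ge\robustutil_\delta(\scheme)\ge$ the LP value, which combined with the previous paragraph gives equality; applying the construction to an LP-optimal solution produces a scheme with robust utility $\OPT_\delta$.

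I expect the main obstacle to be exactly this last step: ruling out that inflating $A$ beyond the true $\delta$-best response set --- which is unavoidable once the open constraint (1) is dropped, cf.\ \Cref{remark:strict-ineq} --- lets the LP overstate the attainable robust utility. It goes through because the relaxation is one-sided: enlarging $A$ only adds sender constraints (shrinking the feasible $\varx(A,\ta)$) while enlarging the set over which the adversary minimizes, so the LP objective stays a valid lower bound on $\robustutil_\delta(\scheme)$. A minor loose end is the degenerate regime $\delta=0$, where $\BR_\delta$ should be read as the argmax set (equivalently, the problem collapses to classical Bayesian persuasion); I would assume $\delta>0$ throughout the main argument.
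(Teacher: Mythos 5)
Your proposal is correct and follows essentially the same route as the paper's proof in \Cref{app:small-action-space}: both directions hinge on the observation that the relaxed LP is one-sided, since the surviving constraint for $a\notin A$ forces $\BR_\delta(\mu_{(A,\ta)})\subseteq A$, so the adversary's minimization ranges over a subset of $A$ and the first constraint lower-bounds the realized robust utility by $\sum\varx(A,\ta)$. You are in fact slightly more careful than the paper on two points (explicitly exhibiting the feasible point witnessing ``LP value $\ge\OPT_\delta$'' via \Cref{lemma:signal-space-size}, and handling signals with $\scheme(\signal)=0$), but the argument is the same.
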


    \section{Efficient algorithm with small state spaces}
\label{sec:small-state-space}
In this section, we focus on the robust persuasion instances where the action space is large, but the state space is small. For such instances, our key observation is that among the $(\numactions-1)2^{\numactions}$ tuples in $\signals$, only a small fraction of them can be realized by posterior distributions and therefore serve as feasible signal candidates. In \Cref{sec:structural-small-state}, we characterize the structural properties of feasible tuples $(A,\ta)\in\signals$ and draw connection to the polytopes in the simplex supported on the state space. In \Cref{sec:alg-small-state}, we leverage these structural insights and design an efficient algorithm that accelerates the computation of the optimal robust signaling scheme.

\subsection{Structural properties}
\label{sec:structural-small-state}
We formally define the feasibility of a candidate tuple $(A,\ta)\in\signals$ according to the existence of a posterior distribution that has $A$ as its $\delta$-BR set and $\ta$ as its best response action.

\begin{definition}[Feasible subset-action tuple]
    \label{def:feasible-signal}
    A tuple $(A,\ta)\in\signals$ is feasible if there exists some posterior distribution $\mu\in\Delta(\states)$ such that $a^\star(\mu)=\ta$ and $\BR_\delta(\mu)=A$. We use $\feasibletuples$ to denote the set of all feasible tuples in $\signals$.
\end{definition}
For each feasible tuple $(A,\ta)\in\feasibletuples$, let $\Delta_{(A,\ta)}$ be the set of posterior distributions that satisfy both constraints in \Cref{def:feasible-signal}:\[
    \Delta_{(A,\ta)}\triangleq \left\{
        \mu\in\Delta(\states)\ \left|\ 
            a^\star(\mu)=\ta,\ 
            \BR_\delta(\mu)=A
        \right.
    \right\}
\]
It is not hard to see that the subsets $\Delta_{(A,\ta)}$ partitions the simplex, because each distribution in the simplex is associated with a unique best response action and $\delta$-BR set.  {That is,}
\[
    \Delta(\states)=\bigcup_{(A,\ta)\in\feasibletuples}\Delta_{(A,\ta)}.
\]
In addition, each $\Delta_{(A,\ta)}$ is a polytope in the simplex that is defined by the following constraints that are linear in the distribution $\mu$:
\begin{align}
    \mu\in\Delta_{(A,\ta)}\iff\begin{cases}
        \exputilreceiver{\mu}{a}\le\exputilreceiver{\mu}{\ta},&\forall a\in A;\\
        \exputilreceiver{\mu}{a}>\exputilreceiver{\mu}{\ta}-\delta,
        &\forall a\in A;\\
        \exputilreceiver{\mu}{a}\le\exputilreceiver{\mu}{\ta}-\delta,
        &\forall a\in \actions\setminus a.
    \end{cases}
    \label{eq:polytope-def}
\end{align}
Our key observation is that the linear constraints that characterize different $\Delta_{(A,\ta)}$ all take one of the following two forms based on an ordered pair of actions $(a,a')\in \actions\times\actions$:
\begin{enumerate}
    \item Either $\exputilreceiver{\mu}{a}\le\exputilreceiver{\mu}{a'}-\delta$, or its complement $\exputilreceiver{\mu}{a}>\exputilreceiver{\mu}{a'}-\delta$;
    \item $\exputilreceiver{\mu}{a}\le\exputilreceiver{\mu}{a'}$,
\end{enumerate}
which give rise to no more than $2\numactions^2$ hyperplanes. Although the number of hyperplanes scales with the potentially large number of actions, when the state space is small, these hyperplanes, and the polytopes they define, all live within a low-dimensional space. 
According to a fundamental theorem in computational geometry {(see, e.g., Theorem 28.1.1 in \citep{halperin2017arrangements})}, these hyperplanes cut the $\numstates$-dimensional into at most $O\left((2\numactions^2)^{\numstates-1}\right)$ cells. This observation helps us to bound the number of feasible tuples in the following lemma. The proof of \Cref{lemma:feasible-num-upper-bound} is deferred to \Cref{app:feasible-subset-bounded}.

\begin{lemma}
    \label[lemma]{lemma:feasible-num-upper-bound}
    The size of the feasible tuples $\feasibletuples$ satisfy $|\feasibletuples|\le \min\left\{\numactions 2^{\numactions-1},
    \numactions^{O(\numstates)}\right\}$.
\end{lemma}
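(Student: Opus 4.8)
The claimed bound is a minimum of two quantities, and the first is immediate: every feasible tuple belongs to $\signals=\{(A,\ta)\mid A\subseteq\actions,\ \ta\in A\}$, so $|\feasibletuples|\le|\signals|=\sum_{k=1}^{\numactions}k\binom{\numactions}{k}=\numactions\cdot 2^{\numactions-1}$. Hence the real work is to show $|\feasibletuples|\le\numactions^{O(\numstates)}$. The plan is to argue that a feasible tuple $(A,\ta)$ is essentially \emph{read off} from the sign pattern of any witnessing posterior $\mu$ relative to the $O(\numactions^2)$ hyperplanes identified just before the lemma, and then invoke the arrangement bound to count how many such sign patterns a point of the $(\numstates-1)$-dimensional simplex can have.

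The key step is the following claim: if $\mathcal H$ denotes the collection of (at most $2\numactions^2$) hyperplanes $\{\mu:\exputilreceiver{\mu}{a}=\exputilreceiver{\mu}{a'}\}$ and $\{\mu:\exputilreceiver{\mu}{a}=\exputilreceiver{\mu}{a'}-\delta\}$ over ordered pairs $(a,a')$, and $\mathrm{sgn}(\mu)\in\{-,0,+\}^{\mathcal H}$ records on which side of each $h\in\mathcal H$ the point $\mu$ sits, then $\mathrm{sgn}(\mu)$ determines both the maximizer set $M(\mu)=\{a:\exputilreceiver{\mu}{a}\ge\exputilreceiver{\mu}{a'}\ \forall a'\}$ and the set $\BR_\delta(\mu)$. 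For $M(\mu)$ this is clear, since membership is a conjunction of comparisons $\exputilreceiver{\mu}{a}\ge\exputilreceiver{\mu}{a'}$, each decided by a hyperplane of the first type. For $\BR_\delta(\mu)$ the point is that $a\in\BR_\delta(\mu)$ iff $\exputilreceiver{\mu}{a}>\exputilreceiver{\mu}{a^\star(\mu)}-\delta$, and since $\exputilreceiver{\mu}{a^\star(\mu)}=\max_{a'}\exputilreceiver{\mu}{a'}$ the tightest such constraint is against the maximizer, so this is equivalent to $\exputilreceiver{\mu}{a}>\exputilreceiver{\mu}{a'}-\delta$ holding for \emph{all} $a'\in\actions$ — again a conjunction decided by hyperplanes of the second type. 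I expect this reformulation (trading the ``best response'' $a^\star$ for a universal quantifier over all actions) to be the conceptual crux, because it is exactly what lets us decouple $\BR_\delta(\mu)$ from the arbitrary tie-breaking used to define $a^\star(\mu)$.

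With the claim in hand I would finish as follows. For each $(A,\ta)\in\feasibletuples$ fix a witnessing posterior $\mu_{(A,\ta)}$ (which exists by \Cref{def:feasible-signal}) and send $(A,\ta)$ to the pair $\bigl(\mathrm{sgn}(\mu_{(A,\ta)}),\ta\bigr)$. This map is injective: two feasible tuples with the same image share $\ta$, and by the claim their common sign pattern forces the same $\BR_\delta$, hence the same $A$. The number of sign patterns realized on $\Delta(\states)$ is the number of (nonempty) faces of the arrangement of $\mathcal H$; since $\Delta(\states)$ spans only an $(\numstates-1)$-dimensional affine subspace, restricting $\mathcal H$ to it leaves at most $2\numactions^2$ hyperplanes in dimension $\numstates-1$, so the standard arrangement bound (see, e.g., Theorem 28.1.1 in \citep{halperin2017arrangements}) gives at most $O\bigl((2\numactions^2)^{\numstates-1}\bigr)=\numactions^{O(\numstates)}$ of them. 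Multiplying by the at most $\numactions$ possible values of $\ta$ yields $|\feasibletuples|\le\numactions\cdot\numactions^{O(\numstates)}=\numactions^{O(\numstates)}$, and combined with the trivial bound $\numactions\cdot 2^{\numactions-1}$ this proves the lemma. The one place requiring care — and the reason for counting $(\text{sign pattern},\ta)$ pairs rather than cells of the arrangement — is that the polytopes $\Delta_{(A,\ta)}$ need not be full-dimensional and the tie-breaking in $a^\star$ could let one region of the simplex witness several tuples; routing everything through $\BR_\delta(\mu)$ (which is tie-break-independent) and absorbing the residual ambiguity into the harmless factor of $\numactions$ disposes of this cleanly.
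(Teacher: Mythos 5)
Your proposal is correct and follows essentially the same route as the paper: both bound $|\feasibletuples|$ trivially by $|\signals|=\numactions 2^{\numactions-1}$, and both obtain the $\numactions^{O(\numstates)}$ bound by forming the arrangement of the $O(\numactions^2)$ hyperplanes $\{\exputilreceiver{\mu}{a}=\exputilreceiver{\mu}{a'}\}$ and $\{\exputilreceiver{\mu}{a}=\exputilreceiver{\mu}{a'}-\delta\}$ inside the $(\numstates-1)$-dimensional simplex and invoking the standard cell-counting theorem. Your version is in fact slightly more careful than the paper's, since you explicitly decouple $\BR_\delta(\mu)$ from the tie-breaking in $a^\star(\mu)$ and absorb the residual ambiguity into an extra factor of $\numactions$, a point the paper's proof glosses over.
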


Next, we establish some structural properties on the set of feasible tuples $\feasibletuples$ that could facilitate the efficient search over candidate signals. 
In the remainder of this section, we first define a bounded-degree \emph{symmetric difference graph} on all tuples in $\signals$, then show that all feasible tuples in $\feasibletuples$ form a connected component in the symmetric difference graph.
\begin{definition}[Symmetric difference graph]
    In the symmetric difference graph $G(\signals,E)$, each vertex is a tuple $(A,\ta)\in\signals$. Each edge $\{(A_1,\ta_1),(A_2,\ta_2)\}\in E$ represents that the symmetric difference between tuples $(A_1,\ta_1)$ and $(A_2,\ta_2)$ is {of size} $1$, which is satisfied for the following two cases:\begin{itemize}
        \item $A_1=A_2$, and $\ta_1\neq \ta_2$.
        \item $\ta_1=\ta_2\in A_1\cap A_2$, and $\left|(A_1\setminus A_2)\cup(A_2\setminus A_1)\right|=1$.
    \end{itemize}
\end{definition}
The symmetric difference graph $G$ characterizes the structural relationship between tuples in $\signals$, in which only any two tuples are connected if and only if their symmetric difference is small enough. Together with the following lemma, $G$ provides a useful framework for us to search within the exponentially large vertex set.
\begin{lemma}[Connectivity]
    \label[lemma]{lemma:connectivity}
    The subgraph of $G$ induced by $\feasibletuples$ is connected.
\end{lemma}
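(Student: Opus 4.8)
The plan is to prove connectivity by exploiting the geometry of the partition $\{\Delta_{(A,\ta)}\}_{(A,\ta)\in\feasibletuples}$ of the simplex $\Delta(\states)$ into relatively open polytope cells. The key intuition is that the simplex is connected (indeed, convex), and moving along a generic path between two points of the simplex crosses only one hyperplane at a time; each such crossing corresponds to exactly one edge of the symmetric difference graph $G$. I would first show that the partition cells are the (relatively open) full-dimensional faces of the hyperplane arrangement generated by the $\le 2\numactions^2$ hyperplanes identified before \Cref{lemma:feasible-num-upper-bound}, together with lower-dimensional ``boundary'' pieces; the full-dimensional cells have nonempty interior in $\Delta(\states)$, and two such cells are arrangement-adjacent (share a facet) exactly when their defining sign patterns differ in one hyperplane.

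The main steps, in order, are: (1) Fix two feasible tuples $(A_1,\ta_1)$ and $(A_2,\ta_2)$; pick interior points $\mu_1\in\Delta_{(A_1,\ta_1)}$ and $\mu_2\in\Delta_{(A_2,\ta_2)}$ that are in ``general position'' with respect to all $\le 2\numactions^2$ hyperplanes (generic points exist since the union of the hyperplanes has measure zero, and the relatively open cells $\Delta_{(A,\ta)}$ for feasible tuples are exactly those intersecting the full-dimensional part). (2) Perturb the straight segment $\mu_1\mu_2$ slightly, if necessary, to a path inside $\Delta(\states)$ that meets each hyperplane transversally, at distinct parameter values, never passing through any intersection of two hyperplanes. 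Such a path visits a finite sequence of full-dimensional cells $C_0=\Delta_{(A_1,\ta_1)}, C_1,\dots,C_k=\Delta_{(A_2,\ta_2)}$, where consecutive cells are separated by a single hyperplane crossing. (3) Show that each single crossing corresponds to an edge of $G$ between the associated tuples: crossing a hyperplane of the form $\exputilreceiver{\mu}{a}=\exputilreceiver{\mu}{a'}$ can only change whether $a$ or $a'$ is the argmax, hence changes $\ta$ while leaving $A$ unchanged (the first edge type); crossing a hyperplane $\exputilreceiver{\mu}{a}=\exputilreceiver{\mu}{a'}-\delta$ changes only whether $a$ lies in the $\delta$-BR set, i.e., inserts or removes a single element of $A$ while fixing $\ta$ (the second edge type, after checking $\ta$ stays in $A_1\cap A_2$). (4) Conclude that the sequence of cells gives a walk in $G$ from $(A_1,\ta_1)$ to $(A_2,\ta_2)$ through feasible tuples only, so the induced subgraph is connected.

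The delicate points, which I expect to be the main obstacle, are the general-position / transversality arguments in steps (1)–(2), and carefully handling degeneracies where a single hyperplane crossing might appear to change $\ta$ and $A$ simultaneously or where $\ta$ could momentarily leave $A$. I would address the latter by noting that at a crossing of a ``$\delta$-gap'' hyperplane for the pair $(a,a')$, the identity of the argmax $\ta$ is governed by the ``$0$-gap'' hyperplanes, which are not being crossed, so $\ta$ is locally constant; and since $\delta>0$, the argmax action always strictly satisfies the $\delta$-BR condition, so $\ta$ remains in both $A_1$ and $A_2$. For step (1), rather than arguing about which cells are full-dimensional, it is cleaner to observe that every feasible $\Delta_{(A,\ta)}$ contains an interior point relative to the simplex — this is where the strict inequality in the $\delta$-BR definition (see \Cref{remark:strict-ineq}) matters — so generic interior points exist; if some feasible cell had empty interior it would still be handled by perturbing $\mu_i$ off its boundary into an adjacent full-dimensional feasible cell and prepending/appending the corresponding single edge. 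The rest of the argument is routine polyhedral combinatorics, so I would keep those calculations brief.
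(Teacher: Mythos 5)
Your proposal is correct and follows essentially the same route as the paper: both arguments reduce connectivity of the induced subgraph to the connectivity of the simplex, showing that two cells of the arrangement separated by a single hyperplane crossing correspond to an edge of $G$, with the same two-case analysis (a ``$0$-gap'' hyperplane $\exputilreceiver{\mu}{a}=\exputilreceiver{\mu}{a'}$ changes only $\ta$, a ``$\delta$-gap'' hyperplane changes $A$ by a single element while $\ta$, being a strict $\delta$-best response since $\delta>0$, stays fixed and in $A_1\cap A_2$). Your explicit treatment of transversality and of possibly lower-dimensional feasible cells is, if anything, slightly more careful than the paper's continuity argument.
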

We defer the proof of \Cref{lemma:connectivity} to \Cref{app:proof-connectivity}. At a high level, the proof translates the connectivity of the polytopes $\Delta_{(A,\ta)}$ in the simplex space to the connectivity of feasible tuples in the symmetric distance graph $G$ by analyzing the geometric properties of the common face of shared by every pair of adjacent polytopes.

\subsection{Algorithm}
\label{sec:alg-small-state}
In this section, we present \Cref{alg:dfs} that leverages the connectivity of the feasible tuples to efficiently search for $\feasibletuples$. This algorithm essentially performs depth-first-search (DFS) on the connectivity graph by recursively searching all the neighbors with symmetric difference {of size} $1$ to the given tuple that it's currently searching. The initial tuple to start the \textsc{Explore} procedure can be set to any tuple that is already known to be feasible, e.g., the $\delta$-BR set and the optimal response associated with the prior distribution $\mu_0$.

\begin{algorithm}[!htbp]
    \SetAlgoLined
    \LinesNumbered
	\KwIn{tuple $(A,\ta)\in\signals$}
    Check the feasibility of $(A,\ta)$ by solving the LP in \Cref{fig:lp-check-feasibility}, $\vareps^\star\gets$ optimal objective value\;
    \If{$\vareps^\star>0$}{
        Mark $(A,\ta)$ checked and feasible\;
        \For{actions $a\in A\setminus\{\ta\}$}{
            If $(A\setminus\{a\}, \ta)$ is unchecked, call \textsc{Explore}($(A\setminus\{a\}, \ta)$)\;
            If $(A,a)$ is unchecked, call \textsc{Explore}($(A,a)$)\;
        }
        \For{actions $a\in \actions\setminus A$}{
            If $(A\cup\{a\},\ta)$ is unchecked, call \textsc{Explore}($(A\cup\{a\},\ta)$)\;
        }
    }
    \Else{Mark $(A,\ta)$ checked and infeasible}
	\caption{\textsc{Explore}}
	\label{alg:dfs}
\end{algorithm}
Note that each tuple $(A,\ta)\in\signals$ is feasible if and only if there exists $\mu\in\signals$ that satisfies all three constraints in \Cref{eq:polytope-def} simultaneously, which can be written as checking the feasibility of the corresponding LP {(with strict-inequality constraints)} on the distributions $\varmu$. To replace strict-inequality constraints with non-strict constraints, we propose to check the LP via a margin-maximization trick in \Cref{fig:lp-check-feasibility} of \Cref{app:lp-feasibility}. This LP is defined on a closed polytope, and asserts feasibility if and only if the optimal margin is strictly positive, i.e., $\vareps^\star>0$.

\begin{theorem}
    Running \Cref{alg:small-action-space} with initial tuple $(\BR_\delta(\mu_0),a^\star(\mu_0))$ finds all feasible tuples $\feasibletuples\subseteq\signals$ by solving at most $\numactions^{O(\numstates)}$ LPs, each of size $O(\numstates+\numactions)$.
\end{theorem}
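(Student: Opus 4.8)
The plan is to split the argument into a \emph{correctness} part (the \textsc{Explore} procedure of \Cref{alg:dfs}, started at $(\BR_\delta(\mu_0),a^\star(\mu_0))$, outputs exactly $\feasibletuples$) and a \emph{complexity} part (it solves at most $\numactions^{O(\numstates)}$ LPs, each of size $O(\numstates+\numactions)$). Both parts are essentially bookkeeping on top of results already established: correctness reduces to (a) soundness and completeness of the per-tuple feasibility test and (b) \Cref{lemma:connectivity}, which says the feasible tuples induce a connected subgraph of the symmetric difference graph $G$; the count combines \Cref{lemma:feasible-num-upper-bound} with a degree bound on $G$. I expect no real obstacle here: the one genuinely nontrivial ingredient, connectivity of $\feasibletuples$ in $G$, is \Cref{lemma:connectivity}, which is proved separately.

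\textbf{Step 1 (the feasibility LP decides $\feasibletuples$), and DFS correctness.} By \Cref{def:feasible-signal}, $(A,\ta)\in\feasibletuples$ iff $\Delta_{(A,\ta)}\ne\emptyset$, i.e.\ iff the system \cref{eq:polytope-def} — a closed polytope (the simplex plus the non-strict comparisons) intersected with finitely many open half-spaces $\exputilreceiver{\mu}{a}>\exputilreceiver{\mu}{\ta}-\delta$ — is satisfiable. I would invoke the margin reformulation of \Cref{fig:lp-check-feasibility}: its optimum $\vareps^\star$ is the largest $\eps$ for which the closed system obtained by strengthening each such strict inequality to $\exputilreceiver{\mu}{a}\ge\exputilreceiver{\mu}{\ta}-\delta+\eps$ is feasible; a one-line argument in each direction (take $\eps=\min_a(\exputilreceiver{\mu^\circ}{a}-\exputilreceiver{\mu^\circ}{\ta}+\delta)>0$ for a witness $\mu^\circ\in\Delta_{(A,\ta)}$; conversely, any optimal $\mu$ with $\vareps^\star>0$ lies in $\Delta_{(A,\ta)}$) shows $\vareps^\star>0\iff(A,\ta)\in\feasibletuples$, so line~2 of \Cref{alg:dfs} decides feasibility correctly. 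Next, the three for-loops of \Cref{alg:dfs} enumerate \emph{exactly} the neighbors of $(A,\ta)$ in $G$: the first loop produces $(A\setminus\{a\},\ta)$ and $(A,a)$ for $a\in A\setminus\{\ta\}$, covering both the ``$A_1=A_2$'' edges and the ``remove one element'' edges, and the second loop produces $(A\cup\{a\},\ta)$ for $a\in\actions\setminus A$, covering the ``add one element'' edges; all of these lie in $\signals$. The start tuple $(\BR_\delta(\mu_0),a^\star(\mu_0))$ is feasible, witnessed by $\mu_0$ itself (and $a^\star(\mu_0)\in\BR_\delta(\mu_0)$ since $\delta>0$). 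A routine induction on $G$-distance then shows \textsc{Explore} is called on — and hence, by Step 1, marks feasible — every tuple in the connected component of $(\BR_\delta(\mu_0),a^\star(\mu_0))$ inside $G[\feasibletuples]$; by \Cref{lemma:connectivity} that component is all of $\feasibletuples$. No infeasible tuple is ever marked feasible, and since every tuple is marked ``checked'' at the first call on it, the recursion touches each tuple at most once and terminates. Thus \Cref{alg:dfs} returns exactly $\feasibletuples$.

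\textbf{Step 2 (counting the LPs and their size).} One LP is solved per distinct tuple on which \textsc{Explore} is invoked. Such tuples are: all of $\feasibletuples$, plus possibly some infeasible tuples, each of which — being the argument of a recursive call — is a $G$-neighbor of a feasible tuple. Every vertex $(A,\ta)$ of $G$ has degree $(|A|-1)+(|A|-1)+(\numactions-|A|)=|A|+\numactions-2\le 2\numactions-2$, so the number of distinct tuples checked is at most $(2\numactions-1)\,|\feasibletuples|=O(\numactions)\,|\feasibletuples|$. Plugging in $|\feasibletuples|\le\numactions^{O(\numstates)}$ from \Cref{lemma:feasible-num-upper-bound} gives $O(\numactions)\cdot\numactions^{O(\numstates)}=\numactions^{O(\numstates)}$ LPs (the extra factor $\numactions$ is absorbed into the exponent's $O(\cdot)$). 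Finally, the LP of \Cref{fig:lp-check-feasibility} has $O(\numstates)$ variables — the entries of $\varmu$ together with the margin variable — and $O(\numstates+\numactions)$ constraints: the $\numstates+1$ simplex constraints plus one linear constraint per pairwise receiver-utility comparison appearing in \cref{eq:polytope-def} (at most $3\numactions$ of them); hence its size is $O(\numstates+\numactions)$.

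\textbf{Main obstacle.} There is no serious obstacle beyond \Cref{lemma:connectivity}. The two points that deserve care are: verifying that the for-loops of \Cref{alg:dfs} reproduce the edge set of $G$ exactly, so that the depth-first search cannot skip a feasible neighbor; and observing, via the degree bound on $G$, that the infeasible tuples examined at the ``boundary'' of $\feasibletuples$ inflate the number of LP calls by only a multiplicative $O(\numactions)$, which is harmless against the $\numactions^{O(\numstates)}$ bound.
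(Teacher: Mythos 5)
Your proposal is correct and follows essentially the same route as the paper's proof: correctness via the feasibility LP of \Cref{fig:lp-check-feasibility} together with the connectivity guarantee of \Cref{lemma:connectivity}, and the LP count via bounding the closed neighborhood of $\feasibletuples$ using the $O(\numactions)$ degree bound and \Cref{lemma:feasible-num-upper-bound}. You simply fill in more of the bookkeeping (the margin-LP equivalence and the exact edge enumeration) than the paper does.
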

\begin{proof}
    The correctness of this algorithm is ensured by the connectivity property in \Cref{lemma:connectivity}. It remains to upper bound the number of feasibility checks performed. Since \Cref{alg:dfs} only checks new a vertex when it has not been checked before, the total number of feasibility checks is upper bounded by the size of the closed neighborhood of $\feasibletuples$. Since each tuple $(A,\ta)\in\signals$ has at most $O(\numactions)$ neighbors, the degree of any vertex in the symmetric difference graph $G$ is upper bounded by $O(\numactions)$. Therefore, the size of the closed neighborhood is at most $O(\numactions)\cdot|\feasibletuples|\le \numactions^{O(\numstates)}$.
\end{proof}

The final step for optimizing signaling schemes is to solve the LP in \Cref{fig:lp-robust} with the original signal space $\signals$ replaced by feasible tuples $\feasibletuples$. We summarize the entire procedure and its complexity in \Cref{alg:small-action-space} and \Cref{cor:small-action-space}.
\begin{corollary}
    \label[corollary]{cor:small-action-space}
    When the number of states $\numstates$ is constant, the optimal signaling scheme can be efficiently computed by \Cref{alg:small-action-space}, which involves solving $\numactions^{O(\numstates)}$ LPs of size $O(m+n)$ and then solve a single LP of size $\numstates\cdot\numactions^{O(\numstates)}$.
\end{corollary}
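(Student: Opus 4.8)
The plan is to obtain the statement by assembling three preceding ingredients --- the exact LP of \Cref{fig:lp-robust}, the bound $|\feasibletuples|\le\numactions^{O(\numstates)}$ of \Cref{lemma:feasible-num-upper-bound}, and the connectivity of $\feasibletuples$ in the symmetric difference graph (\Cref{lemma:connectivity}) --- into the two phases of \Cref{alg:small-action-space}. In the first phase, the algorithm calls \textsc{Explore} (\Cref{alg:dfs}) on the initial tuple $(\BR_\delta(\prior),a^\star(\prior))$, which is feasible since $\prior\in\Delta(\states)$ is itself a witnessing posterior; by the theorem established just above, this depth-first search identifies exactly $\feasibletuples$ while performing at most $\numactions^{O(\numstates)}$ feasibility checks, each being the margin-maximization LP of \Cref{fig:lp-check-feasibility} of size $O(\numstates+\numactions)$. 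In the second phase, the algorithm solves the LP of \Cref{fig:lp-robust} with the signal space $\signals$ replaced by the now explicitly enumerated $\feasibletuples$, and outputs the signaling scheme $\optscheme$ read off from the optimal solution.

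Next I would verify correctness, i.e., that the restricted LP still computes $\OPT_\delta$ and still yields an optimal scheme. By \Cref{lemma:signal-space-size} there is an optimal robust scheme in which every used signal corresponds to a distinct pair $(A,\ta)$ with $\ta=a^\star(\mu_\signal)$ and $A=\BR_\delta(\mu_\signal)$; since each posterior $\mu_\signal$ is a genuine element of $\Delta(\states)$, \Cref{def:feasible-signal} forces $(A,\ta)\in\feasibletuples$. Hence this scheme remains a feasible point of the LP of \Cref{fig:lp-robust} after the index set is restricted to $\feasibletuples$, so the restricted optimum is at least $\OPT_\delta$; being a further restriction of the relaxation of \Cref{fig:lp-robust}, whose optimum equals $\OPT_\delta$ by \Cref{thm:small-action-space}, it is also at most $\OPT_\delta$, so the two coincide. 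Finally, the argument in the proof of \Cref{thm:small-action-space} that converts an optimal LP solution into a genuine scheme with the prescribed semantics uses only that the true robust contribution of a tuple is at least its $\varx$-value, and so applies verbatim over the smaller index set; thus the returned $\optscheme$ satisfies $\robustutil_\delta(\optscheme)=\OPT_\delta$.

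I would then tally the running time. The first phase is exactly the content of the theorem preceding the corollary: $\numactions^{O(\numstates)}$ LPs of size $O(\numstates+\numactions)$. For the second phase, the LP of \Cref{fig:lp-robust} restricted to $\feasibletuples$ has $O(\numstates\,|\feasibletuples|)$ probability variables $\varscheme(\omega,A,\ta)$, $|\feasibletuples|$ auxiliary variables $\varx(A,\ta)$, and $O(\numactions\,|\feasibletuples|)$ inequality constraints, so by \Cref{lemma:feasible-num-upper-bound} it is a single LP of size $\numstates\cdot\numactions^{O(\numstates)}$, which is solvable in time polynomial in that size. When $\numstates=O(1)$ we have $\numactions^{O(\numstates)}=\poly(\numactions)$, so the entire procedure runs in time polynomial in the input.

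Within this proof the only step that is more than bookkeeping is the correctness transfer in the second paragraph: one must check that dropping the strict-inequality $\delta$-BR constraints is still harmless after restricting to $\feasibletuples$ --- that is, an optimal solution of the restricted LP, whose posterior for some tuple $(A,\ta)$ may have a $\delta$-BR set that is a proper subset of $A$, can still be read as a scheme whose robust utility is no less than the objective and whose signals still index feasible tuples. This is the same phenomenon handled for \Cref{thm:small-action-space}, and it causes no trouble because shrinking the $\delta$-BR set only increases the per-signal minimum of the sender's utility. All of the genuinely delicate work --- the $\numactions^{O(\numstates)}$ cell count and the connectivity of $\feasibletuples$ --- lives in the lemmas being invoked.
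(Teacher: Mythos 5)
Your proposal is correct and follows exactly the route the paper intends: the corollary is stated as a summary of the preceding theorem (the DFS enumeration of $\feasibletuples$ via $\numactions^{O(\numstates)}$ feasibility LPs) combined with \Cref{thm:small-action-space} applied to the LP of \Cref{fig:lp-robust} restricted to $\feasibletuples$, and your correctness transfer --- that \Cref{lemma:signal-space-size} guarantees an optimal scheme supported on feasible tuples, so restricting the index set loses nothing, while the relaxation argument of \Cref{thm:small-action-space} applies verbatim to the smaller LP --- is precisely the missing glue. The paper leaves this assembly implicit; your write-up supplies it faithfully.
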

\begin{algorithm}[!htbp]
    \SetAlgoLined
    \LinesNumbered
    $(A_0,\ta_0)\gets(\BR_\delta(\mu_0),a^\star(\mu_0))$\;
    Search for $\feasibletuples$ by running \textsc{Search}($A_0,\ta_0$)\;
    Solve the LP in \Cref{fig:lp-robust} with $\signals$ replaced by $\feasibletuples$.
	\caption{\textsc{Algorithm for small state spaces}}
	\label{alg:small-action-space}
\end{algorithm}
    \section{Hardness of computing the exact optimal robust scheme}
In this section, we state and prove \Cref{thm:hardness}, which establishes the computational hardness of finding the exact optimal robust signaling scheme. This result implies that a polynomial-time exact algorithm for the robust persuasion problem does not exist unless $\mathsf{P} = \mathsf{NP}$.

\begin{theorem}
    \label{thm:hardness}
    The problem of computing the exact optimal robust utility is NP-hard.
\end{theorem}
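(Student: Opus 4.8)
The plan is to reduce from a cardinality-restricted variant of \textsc{Subset-Sum}, which I will call \textsc{Balanced-Zero-Sum}: given $2k$ integers $z_1,\dots,z_{2k}$ with $\sum_{i=1}^{2k} z_i = 0$, decide whether there is an $S\subseteq[2k]$ with $|S|=k$ and $\sum_{i\in S} z_i = 0$. This problem is NP-hard by a routine padding reduction from \textsc{Subset-Sum} (after scaling the inputs by a common factor, append a block of zeros together with a block of equal ``offset'' elements so that both the size-$k$ constraint and the zero total are forced). Equivalently it asks whether the $2k$ integers split into two equal-size halves of equal sum; note that $\sum_{i\in S} z_i=0$ already forces $\sum_{i\in\bar S} z_i=0$, so a single subset $S$ is a full certificate.

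From such an instance I would construct, in polynomial time, a robust persuasion instance together with a rational value $V^\star$ (clearly polynomial-time computable) such that $\OPT_\delta = V^\star$ when the instance is a YES instance and $\OPT_\delta < V^\star$ otherwise; a polynomial-time exact algorithm for $\OPT_\delta$ could then compare its output against $V^\star$ and thereby decide \textsc{Balanced-Zero-Sum}. The constructed instance has $\Theta(k)$ states and $\Theta(k)$ actions: one ``integer action'' $a_i$ and one ``integer state'' $\omega_i$ per $i\in[2k]$, a small number of auxiliary states used only to encode the arithmetic of the $z_i$, and one ``poison action'' $a_\perp$ that the sender values at $0$ in every state. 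The sender's utility rewards playing the integer action whose index matches the true integer state, and the prior spreads its mass essentially uniformly over the integer states, so that no single-signal scheme is competitive and any good scheme must split information into several signals that recombine to the prior. The receiver's utility $\utilreceiver$ is where all the work lies, and I would engineer it so that, for every posterior $\mu$: (i) if $\mu$ makes \emph{fewer than} $k$ integer actions $\delta$-optimal, then the poison action is pulled into $\BR_\delta(\mu)$, so an adversarial $\delta$-BR receiver can force sender utility $0$ on that signal --- the ``too few integers is suboptimal'' phenomenon, which genuinely needs the $\delta$-slack (under exact best response a precise single-action recommendation would be perfectly fine); (ii) if $\mu$ makes \emph{more than} $k$ integer actions $\delta$-optimal, then $\BR_\delta(\mu)$ is large, and since $\robustutil_\delta$ is the $\scheme$-weighted sum of the adversarial minima $\min_{a\in\BR_\delta(\mu_\signal)}\exputilsender{\mu_\signal}{a}$, that signal contributes little --- ``too many integers is suboptimal''; (iii) a posterior realizing $\BR_\delta(\mu)=\{a_i:i\in S\}$ for a size-$k$ set $S$, and attaining the best achievable per-signal sender utility, exists if and only if the mass constraints on the auxiliary states imposed by the $\delta$-BR conditions are satisfiable --- which will hold exactly when $\sum_{i\in S} z_i = 0$.

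Given (i)--(iii), correctness follows along standard lines. For the forward direction, a certificate $S$ yields a two-signal scheme: one signal whose posterior realizes $\BR_\delta=\{a_i:i\in S\}$ and one whose posterior realizes $\BR_\delta=\{a_i:i\in\bar S\}$, sent with probabilities chosen so that the two posteriors average back to the prior (Bayes plausibility); since each signal contributes its optimal per-signal value, the robust utility equals $V^\star$. For the converse I would first invoke \Cref{lemma:signal-space-size} to assume, without loss of generality, that every signal $\signal$ is labelled by its pair $(\BR_\delta(\mu_\signal),a^\star(\mu_\signal))$, and then apply the case analysis (i)--(ii) to conclude that any scheme achieving robust utility $V^\star$ must place all its probability on signals whose $\delta$-BR set is a size-$k$ set of integer actions, whereupon (iii) forces at least one such set to have zero sum, i.e.\ produces a YES certificate.

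The main obstacle is item (i): designing $\utilreceiver$ --- with polynomially bounded integer entries and a single fixed $\delta$ --- so that the poison action sits exactly at the $\delta$-threshold, dominated by strictly more than $\delta$ precisely when at least $k$ integer actions are simultaneously within $\delta$ of optimal and not otherwise. Getting (i) together with (ii) to pin the cardinality to exactly $k$, while at the same time routing the arithmetic so that (iii) attaches the zero-sum condition to the \emph{feasibility} of the realizing posterior rather than to a utility value (which is what keeps $\OPT_\delta$ equal to $V^\star$ \emph{exactly}, as needed for hardness of exact --- not merely approximate --- computation), is the delicate engineering step. The rest --- verifying Bayes plausibility of the forward scheme, and exhausting the possible signal labels in the converse via \Cref{lemma:signal-space-size} --- is routine bookkeeping.
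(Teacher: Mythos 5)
You have correctly identified the source problem (the cardinality-balanced zero-sum variant of \textsc{Subset-Sum}, which is exactly what the paper reduces from), the decision threshold ($\OPT_\delta\ge\frac12$ iff YES), and the overall shape of both directions: a two-signal scheme splitting the prior into the uniform distributions on $S$ and $\bar S$ for the forward direction, and a per-posterior case analysis showing every signal contributes at most the threshold value, with equality only at uniform posteriors on zero-sum halves, for the converse. That blueprint matches the paper. However, the entire substance of this reduction lives in the explicit construction of $\utilsender$ and $\utilreceiver$, and you explicitly defer it ("the delicate engineering step"). As it stands this is a genuine gap, not routine bookkeeping: without the utilities in hand, none of (i)--(iii) is established, and the correctness argument has nothing to run on.

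Moreover, the mechanism you sketch for (i)--(ii) is doubtful. You want a single poison action whose membership in $\BR_\delta(\mu)$ is triggered by \emph{how many} integer actions are $\delta$-optimal at $\mu$. But membership of any fixed action in $\BR_\delta(\mu)$ is a single linear threshold condition on the posterior ($\exputilreceiver{\mu}{a}>\exputilreceiver{\mu}{a^\star(\mu)}-\delta$), so it cannot directly encode a combinatorial count of which other actions clear their own thresholds; it is unclear any choice of $\utilreceiver$ realizes this. The paper sidesteps counting entirely: it uses two families of guessing actions whose $\delta$-BR membership and sender payoffs impose, respectively, an upper bound of $\frac{2}{n}$ on every posterior coordinate (``malicious'' actions, which give the sender $0$ and enter $\BR_\delta$ whenever some coordinate exceeds $\frac{2}{n}$) and a lower bound of $\frac{2}{n}$ on every supported coordinate (``benign'' actions, whose sender utility drops below $\frac12$ otherwise). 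Sandwiching forces the posterior to be uniform on a size-$\frac{n}{2}$ subset, and a third pair of sign-matching actions --- with payoffs $1\pm\frac{\delta}{4M}x_j$ for the receiver and $\frac12\mp\frac{1}{4M}x_j$ for the sender, defined directly on the integer states with no auxiliary states --- forces the subset sum to be zero. Note also that the paper attaches the zero-sum condition to a \emph{utility} comparison (the sign-matching action matching $\mathrm{sgn}(\Sum(X_J))$ would strictly exceed receiver utility $1$ unless the sum is zero), not to the feasibility of a posterior as you propose in (iii); your feasibility-based route would need its own argument. To complete your proof you would need to either reproduce a construction of this kind or supply a working alternative gadget; the high-level narrative alone does not suffice.
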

\begin{proof}
For any given $\delta\in(0,1)$, we establish the NP-hardness by constructing a reduction from the NP-complete problem of \emph{Subset Sum} {(see, e.g., 
 Section 8.8 of \citep{kleinberg2006algorithm})} to the problem of finding the optimal $\delta$-robust sender utility in Bayesian persuasion against approximate responses. 
This reduction maps instances of the Subset Sum problem to instances of $\delta$-robust Bayesian persuasion such that an instance of the Subset Sum problem has a solution (a YES instance) if and only if the corresponding $\delta$-robust persuasion instance yields an optimal utility of at least $\frac{1}{2}$.

We start by introducing the format of a subset sum instance.
A subset sum instance is given by a set $X=\{x_1,\cdots,x_n\}$ 
of integers where $\Sum(X)=\sum_{i=1}^n x_i=0$. The problem is to decide whether or not there exists a subset of indices $J\subset [n]$ such that $|J|=\frac{n}{2}$ and $\Sum(X_J)=\sum_{j\in J}x_j=0$. If there exists such a $J$, then $X$ is a YES instance. Otherwise, it is a NO instance. 
{One may check the above version of the problem is also $\mathsf{NP}$-hard. In particular, the requirements that $\Sum(X) = 0$ and $|J| = \frac{n}{2}$ are without loss of generality, because one may add an additional integer to offset the sum of all integers if it is nonzero to begin with, as well as enough $0$'s so there is always a subset of size $\frac{n}{2}$ that sums to $0$ iff we have a YES instance.}
We construct a robust persuasion instance as follows.

\paragraph{States and prior distribution} The state space is defined as $\states=\{\omega_1,\cdots,\omega_n\}$ with $|\states|=|X|=n$, and the prior distribution $\prior=\unif(\Omega)$ is a uniform distribution over the entire state space.

\paragraph{Receiver's action space} 
 The receiver's action space $\actions=A\cup B\cup C$ consists of three categories of actions: malicious guessing actions $A=\{a_1,\cdots,a_n\}$, benign guessing actions $B=\{b_1,\cdots,b_n\}$, and sign matching actions $C=\{c_+,c_-\}$.

\paragraph{Utility functions} 
We define the sender's utility $\utilsender$ and receiver's utility $\utilreceiver$ as follows. For ease of presentation, the range of the receiver's utility is designed to be $[0,2]$, but it is not hard to rescale it to $[0,1]$ to fit our modeling assumptions.
\begin{itemize}
    \item For a malicious guess $a_i \in A$  and any $i \in [n]$, the utility functions are defined as
    \begin{align}
        \utilreceiver(\omega_j,a_i)=\begin{cases}
            1, &i=j\\
            \max\left\{1-\frac{\delta}{1-\frac{2}{n}},0\right\},&i\neq j
        \end{cases},
        \qquad
        \utilsender(\omega_j,a_i)\equiv 0.
    \end{align}
    At a high level, 
    malicious guesses result in a constant $0$ utility on the sender, but the penalty of incorrect guesses on the receiver's side is relatively minor --- on top of the base utility of $1$, the receiver only suffers a utility loss of $\frac{\delta}{1-\frac{2}{n}}$ (when $\delta < 1-\frac{2}{n}$) for an incorrect state guess, which implies that the receiver secures a utility of $\ge1-\delta$ as long as the guess is correct with a probability at least $\frac{2}{n}$. 
    This choice of utility function guarantees that to obtain nontrivial utilities, the sender must prevent the receiver from adopting malicious guesses by creating enough dispersion in the posterior distributions. In particular, the posterior probability on any state should not exceed $\frac{2}{n}$.

    \item  For benign guess $b_i \in B$  for any $i \in [n]$, the utility functions are defined as
    \begin{align}
        \utilreceiver(\omega_j,b_i)=\begin{cases}
            1, &i=j\\
            1-\delta,&i\neq j
        \end{cases},
        \qquad
        \utilsender(\omega_j,b_i)=\begin{cases}
            1, &i=j\\
            \frac{\frac{1}{2}-\frac{2}{n}}{1-\frac{2}{n}}, &i\neq j
        \end{cases}.
    \end{align}
    At a high level, benign guesses, if correct, are beneficial for both the sender and receiver. They are intended to encourage the sender to concentrate more probability mass on some certain state in the posterior distribution. To see this, note that as long as the state $\omega_i$ is in the support of the receiver's posterior, the excepted utility of the receiver will exceed $1-\delta$ and the corresponding benign guessing action $b_i$ enters the $\delta$-BR set. For the sender, the expected utility under malicious guess $b_i$ reaches $\frac{1}{2}$ only when the posterior probability on $\omega_i$ is at least $\frac{2}{n}$. Therefore, to guarantee that the sender has an expected utility of at least $1/2$, the posterior of any state in the support should be no less than $2/n$.
    
    \item For the matching actions $C=\{c_+,c_-\}$, the utility functions are defined as
    \begin{align}
        \utilreceiver(\omega_j,c_{\pm})=1 \pm \frac{\delta}{4M}x_j,
        \qquad
        \utilsender(\omega_j,c_{\pm})=\frac{1}{2}\mp \frac{1}{4M}x_j,
    \end{align}
    where $M = \sum_{x \in X} |x| > 0$. 
    In this case, the receiver gains utility when her chosen action matches the sign of the expected value of $x_i$ under the posterior distribution, whereas the sender loses a proportional amount of utility. Specifically, when the posterior is uniform on a subspace of size $\frac{n}{2}$ --- a design choice that is encouraged by the first two categories of actions --- the receiver chooses sign-matching actions based on the sign of the sum of the corresponding subset's elements. In this special case, the sender's utility falls below $<\frac{1}{2}$ unless the subset sum equals zero.
\end{itemize}

Now, we formally show the reduction from the subset sum problem to $\delta$-robust optimal signaling scheme by dividing the proof into two claims.

\begin{claim}
    If $X$ is a YES instance, then $\OPT_\delta=\robustutil_\delta(\optscheme)\ge\frac{1}{2}$.
    \label[claim]{claim:if-direction}
\end{claim}

\begin{proof} [Proof of \Cref{claim:if-direction}]
For notation convenience, for any subset of indices $J\subseteq[n]$ and any set $\Omega$ s.t. the elements of $\Omega$ are indexed by $[n]$, we use $X_J$ to denote the subset of elements with indices in $J$, i.e., $X_J=\{\omega_j\mid j\in J\}$.
Let $J\subset [n]$ such that $|J|=\frac{n}{2}$ and $\Sum(X_J)=\sum_{j\in J}x_j=0$. Consider the following signaling scheme: $\optscheme(\omega_i, \signal_{+})=\indicator{i\in J},\ \optscheme(\omega_i,\signal_{-})=\indicator{i\in([n]\setminus J)}$. Since the signaling scheme is deterministic, we have $\mu_{\signal_+}=\unif(\states_J)$ and $\mu_{\signal_-}=\unif(\states_{[n]\setminus J})$. 

To compute the robust utility of $\optscheme$, we first compute the $\delta$-BR set $ \BR_\delta(\mu_{\signal_+})$.
For the sign-matching actions $c\in C$, we have \[\exputilreceiver{\mu_{\signal_+}}{c} = 1 \pm \frac{\delta}{4M}\cdot\frac{2}{n}\sum_{j\in J} x_j = 1.\] because the subset $X_J$ is of sum $0$.
For malicious guesses $a_i \in A$, note that $\mu_{\signal_+}(\omega_j) \leq 2/n$ for any $\omega_j \in \Omega$, thus we have
\begin{align*}
    \exputilreceiver{\mu_{\signal_+}}{a_i} \le& \frac{2}{n} + \left(1-\frac{2}{n}\right) \cdot \left( 1-\frac{\delta}{1-\frac{2}{n}}\right) = 1 - \delta.
\end{align*}
For any benign guesses $b_i \in B$, we have 
\begin{align*}
    \exputilreceiver{\mu_{\signal_+}}{b_i} = 1 \cdot \mu_{\signal_+}(\omega_i) + (1 - \delta) (1 - \mu_{\signal_+}(\omega_i)) =  1 - \delta(1-\mu_{\signal_+}(\omega_i)), 
\end{align*}
which implies that $\exputilreceiver{\mu_{\signal_+}}{b_i} > 1 - \delta$ if and only if $\mu_{\signal_+}(\omega_i) > 0$. Therefore, we have $\BR_\delta(\mu_{\signal_+})=B_J\cup C$.

As for the sender's robust utility, we have 
\begin{align*}
   \exputilsender{\mu_{\signal_+}}{b_i} =& 1 \cdot \frac{2}{n} + \left(1-\frac{2}{n}\right) \cdot \frac{\frac{1}{2}-\frac{2}{n}}{1-\frac{2}{n}} = \frac{1}{2},&\forall b_i\in B_J;\\
    \exputilsender{\mu_{\signal_+}}{c} =& \frac{1}{2} \mp \frac{\delta}{4M}\cdot\frac{2}{n}\sum_{j\in J} x_j = \frac{1}{2}\mp \frac{\delta}{4M}\cdot\frac{2}{n}\Sum(X_J)=\frac{1}{2},&\forall c\in C.
\end{align*}
Therefore, we have  $\min_{a\in \BR_\delta(\signal_+)}\exputilsender{\mu_{\signal_+}}{a}=\frac{1}{2}$.

A similar analysis show that $\BR_\delta(\mu_{\signal_-})=B_{[n]\setminus J}\cup C$ and $\min_{a\in \BR_\delta(\signal_-)}\exputilsender{\mu_{\signal_-}}{a}=\frac{1}{2}$. Therefore, \[\OPT_\delta\ge \robustutil_\delta(\optscheme)=\optscheme(\signal_+)\min_{a\in \BR_\delta(\signal_+)}\exputilsender{\mu_{\signal_+}}{a}+\optscheme(\signal_-)\min_{a\in \BR_\delta(\signal_-)}\exputilsender{\mu_{\signal_-}}{a}=\frac{1}{2}. \qedhere\]

\end{proof}

\begin{claim}
    If the optimal sender's utility satisfies $\OPT_\delta\ge\frac{1}{2}$, then $X$ must be a YES instance.
    \label[claim]{claim:only-if-direction}
\end{claim}

\begin{proof}[Proof of \Cref{claim:only-if-direction}]
For every posterior distribution $\mu_\signal$, we show that the worst-case expected sender utitiliy $f_\delta(\mu_\signal) \triangleq \min_{a\in \BR_\delta(\mu_{\signal})}\exputilsender{\mu_{\signal}}{a}\le\frac{1}{2}$, and the equality holds only when $\mu_\signal=\unif(X_J)$ where $|J|=\frac{n}{2}$ and $\Sum(X_J)=0$. 
Therefore, if $\OPT_\delta\ge\frac{1}{2}$, then $f_\delta(\mu_\signal)\le\frac{1}{2}$ must hold with equality for all $\signal\in\signals$, which implies the existence of a subset with size $\frac{n}{2}$ and sum $0$, i.e., $X$ is a YES instance.

Consider the following cases:

\textbf{Case 1.} If $a_i \in \BR_\delta(\mu_\signal)\cap A$ for some $i\in[n]$, then $f_\delta(\mu_\signal) \le \exputilsender{\mu_\signal}{a_i} = 0 < \frac{1}{2}$. 

\textbf{Case 2.} IF $\BR_\delta(\mu_\signal)\cap A=\emptyset$. Consider two subcases based on whether or not $\exputilreceiver{\mu_\signal}{a^\star(\mu_\signal)} > 1$. In the case of $\exputilreceiver{\mu_\signal}{a^\star(\mu_\signal)} > 1$, 
it must be that $a^\star(\mu_\signal)\in C$ because other actions all have receiver utility $\le1$. From the design of sign-matching utilities, the sender's utility exceeds $\frac{1}{2}$ if and only if the receiver's utility is strictly lower than $1$, so we must have $f_\delta(\mu_\signal)\le \exputilsender{\mu_\signal}{a^\star(\mu_\signal)}<\frac{1}{2}$. Therefore, it remains to consider the case of $\exputilreceiver{\mu_\signal}{a^\star(\mu_\signal)} \le 1$. 
    
    On the one hand, combining with the fact that $\BR_\delta(\mu_\signal)\cap A=\emptyset$, this gives 
        \begin{align}
            \max_{a_i\in A}\exputilreceiver{\mu_\signal}{a_i}\le \exputilreceiver{\mu_\signal}{a^\star(\mu_\signal)} - \delta \leq 1-\delta
        \ \Rightarrow\ 
        \max_{i\in[n]}\mu_\signal(\omega_i)\le\frac{2}{n}, \label{prob-small}
        \end{align}
        Which implies that all posterior probabilities should be no larger than $\frac{2}{n}$.
        On the other hand, let $J=\supp(\mu_\signal)$, then for all $i \in J$, we have \[\exputilreceiver{\mu_\signal}{b_i} > 1-\delta \ge \exputilreceiver{\mu_\signal}{a^\star(\mu_\signal)}-\delta, \]
        which implies $B_J\subseteq \BR_\delta(\mu_\signal)$. We further consider the following two subcases: 
        \begin{itemize}
            \item If $\exists i \in J$, s.t. $\mu_\signal(\omega_i)<\frac{2}{n}$, then we have $f_\delta(\mu_\signal)\le \exputilsender{\mu_\signal}{b_i}< 1 \cdot \frac{2}{n} + \left(1-\frac{2}{n}\right) \cdot \frac{\frac{1}{2}-\frac{2}{n}}{1-\frac{2}{n}} = \frac{1}{2}$.
            \item Otherwise, we have $\min_{j\in J}\mu_\signal(\omega_j)\ge \frac{2}{n}$, i.e., all non-zero posterior probabilities should be no smaller than $\frac{2}{n}$. Since \Cref{prob-small} implies that all posterior probabilities should also be no larger than $\frac{2}{n}$, it must be the case that they are exactly $\frac{2}{n}$, i.e., $\mu_{\signal}$ is a uniform distribution supported on exactly $|J|=\frac{n}{2}$ elements.
            Therefore, $$f_\delta(\mu_\signal)\le \min_{i\in J}\exputilsender{\mu_\signal}{b_i} = 1 \cdot \frac{2}{n} + (1-2/n) \cdot \frac{\frac{1}{2}-\frac{2}{n}}{1-\frac{2}{n}}  = \frac{1}{2}.$$ 
            Moreover, the uniform distribution also implies $\exputilreceiver{\mu_\signal}{c_\pm}=1+ \frac{\delta}{4M} \cdot\frac{2}{n} \cdot \big(\Sum(X_J)\big)_\pm$. Together with the fact that $\exputilreceiver{\mu_\signal}{a^\star(\mu_\signal)}\le 1$, we have 
            \begin{align}
               \exputilreceiver{\mu_\signal}{c_{\textsf{sgn}(\Sum(X_J))}}=1+\frac{\delta}{2Mn}|\Sum(X_J)|\le \exputilreceiver{\mu_\signal}{a^\star(\mu_\signal)}\le 1\ \Rightarrow \Sum(X_J)=0.
            \end{align}
            Therefore, $J\subset[n]$ is a subset of size $\frac{n}{2}$ that satisfies $\Sum(X_J)=0$, and thus $X$ is a YES instance. 
             In this case, we have $\BR(\mu_\signal)=B_J\cup C$ and $f_\delta(\mu_\signal)=\frac{1}{2}$.  
\end{itemize}
In conclusion, we have shown that all but the last case have $f_\delta(\mu_\signal)<\frac{1}{2}$, whereas the last case has both $f_\delta(\mu_\signal)=\frac{1}{2}$ and that the subset sum problem is a YES instance. This finishes the proof of the claim.
\end{proof}
Combining \Cref{claim:if-direction} and \Cref{claim:only-if-direction} implies that $\OPT_\delta\ge\frac{1}{2}$ if and only if the subset sum problem $X$ is a YES instance. We have thus established a reduction from the subset sum problem to that of computing the optimal $\delta$-robust signaling scheme, which proves the NP-hardness of the later problem.
\end{proof}

    \section{Approximation algorithm in quasi-polynomial time}
In this section, we present a quasi-polynomial time approximation scheme (QPTAS) that computes an $\eps$-approximate optimal $\delta$-robust signaling scheme $\approxscheme$ for any given $\delta>0$, such that \[
    \robustutil_\delta(\approxscheme)\ge \OPT_\delta-\eps.
    \]
Our $\delta$-optimal signaling scheme is supported on a significantly different signal space compared to the optimal schemes for cases of small state space and small action space. The high-level idea is, instead of enumerating each $(\BR_\delta(\mu),a^\star(\mu))$ set, we partition the simplex $\Delta(\states)$ into small cells, each centered around a $k$-uniform distribution $\mubar$ (to be defined later), and incorporate additional semantics to ensure that the utility profiles $\exputilsender{\mubar}{\cdot}$ are representative enough for all distributions within this cell. 
It is useful to have the utilities at $\mubar$ being representative because we can then determine the $\delta$-BR sets according to the relative magnitude of utilities at $\mubar$ and translate the requirements into linear constraints.

Formally, for an integer $k$, a distribution $\mu\in\Delta(\states)$ is $k$-uniform if $\forall \omega\in\states,\ \mu(\omega)=\frac{k_\omega}{k}$ for some integer $k_\omega\le k$.
Let $\cG_k\subseteq\Delta(\states)$ denote the set of all $k$-uniform distributions on the state space. We have $|\cG_k|= O(\numstates^k)$. Moreover, by 
\citep{althofer1994sparse}, for $k_\eps=\left\lceil\frac{\log(2\numactions)}{2\eps^2}\right\rceil$, we have:
\begin{align*}
    \forall \mu\in\Delta(\states),\ \exists\overline{\mu}\in\cG_k,\ \text{s.t.}\qquad
    \forall a\in\actions,\ \left|\exputilsender{\mu}{a}-\exputilsender{\mubar}{a}\right|\le\eps.
    \label{eq:sender-util-closeness}
\end{align*}
Accordingly, we define the $\eps$-cell around $\overline{\mu}\in\cG_k$ to be
\begin{align*}
    \cell_\eps^{\overline{\mu}}=\left\{\mu\in\Delta(\states):\ 
    \forall a\in\actions,\ \left|
    \exputilsender{\mu}{a}-\exputilsender{\mubar}{a}    
    \right|\le\eps
    \right\}.
\end{align*}
Note that given $\eps$ and $\overline{\mu}$, $\mu\in\cell_\eps^{\overline{\mu}}$ is equivalent to a set of $2\numactions$ linear constraints on $\mu$. 

We consider the following signal space:
\begin{align}
    \approxsignals=\left\{(\mubar,\ta,a)\mid 
    \mubar\in\cG_{k_{\eps'}},\ \ta,\tta\in A\right\},\qquad\text{where }
    k_{\eps'}=\left\lceil\frac{\log(2\numactions)}{2(\eps')^2}\right\rceil,\ \eps'=\frac{\eps}{5}
\end{align}
Define LP variables $\varscheme(\omega,\mubar,\ta,\tta)$ for each $\omega\in\states$ and each signal $(\mubar,\ta,\tta)\in\approxsignals$ to represent the conditional probabilities of sending that signal.
Each signal $\signal=(\mubar,\ta,\tta)$ is specified by the $k$-uniform distribution $\mubar$ that centers the cell, the best response action $\ta$, and the action $\tta$ in $\delta$-BR set that minimizes the sender's strategy.
The LP is defined in \Cref{fig:lp-qptas}.

\begin{figure}[!htbp]
    \centering
    \noindent \fbox{
    \begin{minipage}[t][\height][c]{\dimexpr\textwidth-2\fboxsep-2\fboxrule\relax}
        \centering
        \begin{align*}
            \Maximize_{\varscheme,\varx}\quad&\sum_{\omega\in\states}
            \prior(\omega)\sum_{(\mubar,\ta,\tta)\in\approxsignals}
            \varscheme(\omega,\mubar,\ta,\tta)\utilsender(\omega,\tta) \\
            \subjectto\quad &
            \sum_{\omega\in\states}
            \prior(\omega)\varscheme{(\omega,\mubar, \ta, \tta)}\cdot \left(\utilsender(\omega,a)- \exputilsender{\mubar}{a}+\eps'\right)\ge0\tag{$\forall a\in \actions,\ \forall(\mubar,\ta,\tta)\in\approxsignals$}\\
            &\sum_{\omega\in\states}
            \prior(\omega)\varscheme{(\omega,\mubar, \ta, \tta)}\cdot \left(\utilsender(\omega,a)- \exputilsender{\mubar}{a}-\eps'\right)\le0\tag{$\forall a\in \actions,\ \forall(\mubar,\ta,\tta)\in\approxsignals$}\\
        &\sum_{\omega\in\states} \prior(\omega) \varscheme(\omega,\mubar,\ta,\tta) \left(\utilreceiver(\omega,\ta)-\utilreceiver(\omega,a)\right)\ge0, \tag{$\forall a\in\actions,\ \forall (\mubar,\ta,\tta)\in\approxsignals$}\\
        &   \sum_{\omega\in\states} \prior(\omega) \varscheme(\omega,\mubar,\ta,\tta)\left(\utilreceiver(\omega,\ta)-\utilreceiver(\omega,a)-\delta\right)\ge 0,\tag{$\forall a \text{ s.t. }\exputilsender{\mubar}{a}<\exputilsender{\mubar}{\tta}-2\eps'$}\\
        &\tag{$\forall (\mubar,\ta,\tta)\in\approxsignals$}\\
        & \sum_{(\mubar,\ta,\tta)\in\approxsignals}\varscheme(\omega,\mubar,\ta,\tta)=1
        \tag{$\forall\omega\in\states$}\\
        & \varscheme(\omega,\mubar,\ta,\tta)\ge0.
        \tag{$\forall\omega\in\states,\forall (\mubar,\ta,\tta)\in\approxsignals$}
        \end{align*}
        \end{minipage}
    }
    \caption{QPTAS for computing an $\eps$-approximate robust signaling scheme, where $\eps'=\frac{\eps}{5}$.}
    \label{fig:lp-qptas}
    \end{figure}    

Intuitively, the first two constraints in \Cref{fig:lp-qptas} are designed to ensure $\mu_{(\mubar,\ta,\tta)}\in \cell_\eps^{\mubar}$. The third and fourth constraints describe the semantics of the receiver's response strategy. Finally, the last two constraints guarantee that $\varscheme$ represents a valid signal distribution on each state.

\begin{theorem}
    \label{thm:qptas}
    For any $\eps>0$ and $\delta>0$, the optimal solution to the LP in \Cref{fig:lp-qptas} is an $\eps$-approximate $\delta$-robust signaling scheme that is supported on $O\left(\numactions^2\numstates^{\left\lceil{12.5\log(2\numactions)}/{(\eps^2)}\right\rceil}\right)$ signals.
\end{theorem}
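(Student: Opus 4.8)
The plan is to show that the LP in \Cref{fig:lp-qptas} is, up to an additive $O(\eps')$ error, an exact reformulation of the sender's problem \eqref{eq:sender-problem} restricted to the signal space $\approxsignals$, and that this restriction is lossless. I would prove two facts and combine them with the trivial inequality $\robustutil_\delta(\approxscheme)\le\OPT_\delta$: \textbf{(i) soundness:} every feasible solution $\varscheme$, read as a signaling scheme, satisfies that $\robustutil_\delta(\varscheme)$ is at least the LP objective value of $\varscheme$ minus $4\eps'$; \textbf{(ii) completeness:} there is a feasible solution whose LP objective value equals $\OPT_\delta$ exactly. The optimal LP solution $\approxscheme$ then satisfies $\robustutil_\delta(\approxscheme)\ge\OPT_\delta-4\eps'$, and since $\eps'=\eps/5$ this gives the desired $\robustutil_\delta(\approxscheme)\ge\OPT_\delta-\eps$. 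The support bound is a direct count: the induced scheme uses at most $|\approxsignals|=|\cG_{k_{\eps'}}|\cdot\numactions^2=O(\numstates^{k_{\eps'}})\cdot\numactions^2$ signals, and $k_{\eps'}=\lceil\log(2\numactions)/(2(\eps')^2)\rceil=\lceil 12.5\log(2\numactions)/\eps^2\rceil$.

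For soundness I would fix a feasible $\varscheme$ and, for each signal $\signal=(\mubar,\ta,\tta)$ that is actually sent (i.e.\ $\scheme(\signal)=\sum_\omega\prior(\omega)\varscheme(\omega,\signal)>0$; unsent signals contribute nothing), divide each constraint through by $\scheme(\signal)$ and read it through Bayes' rule as a statement about the posterior $\mu_\signal$. The first two constraints give $\mu_\signal\in\cell_{\eps'}^{\mubar}$, i.e.\ $|\exputilsender{\mu_\signal}{a}-\exputilsender{\mubar}{a}|\le\eps'$ for all $a$; the third gives $\exputilreceiver{\mu_\signal}{\ta}\ge\exputilreceiver{\mu_\signal}{a}$ for all $a$, so $\ta$ attains the receiver's optimal posterior utility; the fourth gives $\exputilreceiver{\mu_\signal}{a}\le\exputilreceiver{\mu_\signal}{\ta}-\delta$, hence $a\notin\BR_\delta(\mu_\signal)$, whenever $\exputilsender{\mubar}{a}<\exputilsender{\mubar}{\tta}-2\eps'$. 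Taking the contrapositive and chaining with the cell bound, every $a\in\BR_\delta(\mu_\signal)$ satisfies $\exputilsender{\mu_\signal}{a}\ge\exputilsender{\mubar}{a}-\eps'\ge\exputilsender{\mubar}{\tta}-3\eps'\ge\exputilsender{\mu_\signal}{\tta}-4\eps'$, so $\min_{a\in\BR_\delta(\mu_\signal)}\exputilsender{\mu_\signal}{a}\ge\exputilsender{\mu_\signal}{\tta}-4\eps'$. Summing over signals with weights $\scheme(\signal)$ and invoking \Cref{remark:worst-case-response-strategy} gives $\robustutil_\delta(\varscheme)\ge\sum_{\omega}\prior(\omega)\sum_{(\mubar,\ta,\tta)\in\approxsignals}\varscheme(\omega,\mubar,\ta,\tta)\,\utilsender(\omega,\tta)-4\eps'$, which is exactly the LP objective value of $\varscheme$ minus $4\eps'$.

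For completeness I would take an optimal robust scheme $\optscheme$ supported on signals of the form $(A,\ta)$, which exists by \Cref{lemma:signal-space-size}. For each such signal $\signal$ I pick $\mubar\in\cG_{k_{\eps'}}$ with $\mu_\signal\in\cell_{\eps'}^{\mubar}$ (which exists by \citep{althofer1994sparse}), set $\ta:=a^\star(\mu_\signal)$ and $\tta:=\argmin_{a\in\BR_\delta(\mu_\signal)}\exputilsender{\mu_\signal}{a}$ (the worst-case $\delta$-BR action, cf.\ \Cref{remark:worst-case-response-strategy}), and route the conditional probabilities $\optscheme(\omega,\signal)$ into the LP signal $(\mubar,\ta,\tta)\in\approxsignals$, adding together any original signals that collide on the same target. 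The routing preserves $\sum_{\signal'\in\approxsignals}\varscheme(\omega,\signal')=1$ for every $\omega$, so $\varscheme$ is a valid scheme; and since each $\cell_{\eps'}^{\mubar}$ is convex and $\exputilreceiver{\cdot}{a}$ is linear in the belief, all four constraints --- which hold for every individual $\mu_\signal$ routed to a given target --- are inherited by the posterior of the merged signal, which is a convex combination of those $\mu_\signal$'s. (The fourth constraint in particular: if $a\in\BR_\delta(\mu_\signal)$ then $\exputilsender{\mu_\signal}{a}\ge\exputilsender{\mu_\signal}{\tta}$ by choice of $\tta$, hence $\exputilsender{\mubar}{a}\ge\exputilsender{\mubar}{\tta}-2\eps'$, so the hypothesis of the constraint is vacuous for that $a$.) Finally, since $\tta$ is the exact worst-case response at $\mu_\signal$, the LP objective value of $\varscheme$ collapses to $\sum_\signal\scheme(\signal)\min_{a\in\BR_\delta(\mu_\signal)}\exputilsender{\mu_\signal}{a}=\robustutil_\delta(\optscheme)=\OPT_\delta$.

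The routine parts are the Bayes-rule rewriting of the constraints and the count of $|\approxsignals|$. The step requiring the most care is the soundness chain, where three $\eps'$-slacks accumulate --- two from $\mu_\signal\in\cell_{\eps'}^{\mubar}$ and one from the $2\eps'$ margin built into the fourth constraint --- so the grid accuracy and that margin must be tuned (here $\eps'=\eps/5$) to keep the total loss $4\eps'$ below $\eps$; this is why the construction uses $k_{\eps'}$-uniform distributions rather than a cruder net. The other delicate point is checking in the completeness step that merging colliding signals preserves feasibility, which relies precisely on convexity of the cells $\cell_{\eps'}^{\mubar}$ and linearity of the receiver's expected utility in the belief. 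Crucially, no discontinuity problem arises because the receiver's realized action is never estimated: the third and fourth constraints are written in terms of the \emph{exact} receiver utilities $\utilreceiver$, and only the sender's utilities are ever approximated, through the cells.
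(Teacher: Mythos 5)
Your proposal is correct and mirrors the paper's proof: the same two-part decomposition (a routing/merging argument showing the LP objective is at least $\OPT_\delta$ up to grid error, and a soundness argument showing every feasible solution's robust utility is within $4\eps'$ of its LP objective, with the $2\eps'$ margin in the fourth constraint and the two cell constraints supplying the chain of slacks). The only deviation is that you define $\tta$ as the sender's worst $\delta$-BR action at the true posterior $\mu_\signal$ rather than at the grid point $\mubar_\signal$ as in \cref{eq:tta-def}, which makes your completeness step exact ($\OPTobj\ge\OPT_\delta$) where the paper loses an additional $\eps'$; both error budgets fit within $\eps=5\eps'$.
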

\begin{proof}[Proof of \Cref{thm:qptas}]
    Let $\OPTobj$ be the optimal objective value and let $\approxscheme$ be the signaling scheme induced by the optimal solution $\varscheme^\star$. We divide the proof into two parts: the first part proves $\OPTobj\ge\OPT_\delta-\eps'$, and the second part proves $\robustutil_\delta(\approxscheme)\ge\OPTobj-4\eps'$. Putting both together establishes the claim because \[
        \robustutil_\delta(\approxscheme)\ge\OPTobj-4\eps'
        \ge \OPT_\delta-4\eps'-\eps'
        =\OPT_\delta-5\eps'=\eps,
    \]
    where the last step follows from setting $\eps'=\frac{\eps}{5}$ in \Cref{fig:lp-qptas}.
    
    To prove the first inequality of $\OPTobj\ge\OPT_\delta-\eps'$, we show that the optimal robust signaling scheme can be discretized to fit in our LP with no more than $\eps'$ discretization error. Specifically, let $\scheme:\states\to\signals$ be an optimal signaling scheme that has $\robustutil(\optscheme_1)=\OPT_\delta$. We will construct LP variables $\varscheme$ that are feasible in \Cref{fig:lp-qptas}.

    For each $\signal\in\signals$. there exists $\mubar_\signal\in\cG_k$ such that $\forall a\in\actions,\ \left|\exputilsender{\mu_\signal}{a}-\exputilsender{\mubar_\signal}{a}\right|\le\eps.$
    let $\signals_{(\mubar,\ta,\tta)}\subseteq\signals$ be the subset of signals $\signal\in\signals$ with $\mubar=\mubar_\signal$, $a^\star(\mu_\signal)=\ta$, and
    \begin{align}
        \tta\in \argmin_{a: \exputilreceiver{\mu_{\signal}}{a}>\exputilreceiver{\mu_{\signal}}{\ta}-\delta}
        \exputilsender{\mubar_\signal}{a}
        \label{eq:tta-def}
    \end{align}
       We define $\varscheme$ as merging the signals in $\signals_{(\mubar,\ta,\tta)}$. Formally,
    \begin{align}
        \label{eq:def-lp-var}
        \varscheme(\omega,\mubar,\ta,\tta)\triangleq\sum_{\signal\in\signals_{(\mubar,\ta,\tta)}}\scheme(\omega,\signal).
    \end{align}
    Such a construction immediately satisfy the last two constraints the LP, because all the subsets $\signals_{(\mubar,\ta,\tta)}\subseteq\signals$ where $(\mubar,\ta,\tta)\in\approxsignals$ form a partition of the original signal space $\signals$.
    The first two constraints are also satisfied because for each $\mubar\in\cG_k$,
    \begin{align*}
        &\sum_{\omega\in\states}
        \prior(\omega)\varscheme{(\omega,\mubar, \ta, \tta)}\cdot \left(\utilsender(\omega,a')- \exputilsender{\mubar}{a}+\eps\right)\\
            =&       \sum_{\omega\in\states}     \prior(\omega)\sum_{\signal\in\signals_{(\mubar,\ta,\tta)}}\scheme(\omega,\signal)\cdot \left(\utilsender(\omega,a')- \exputilsender{\mubar_\signal}{a}+\eps\right)
            \tag{definition of $\varscheme$ in \cref{eq:def-lp-var}}
            \\
            \ge&\sum_{\omega\in\states}
            \prior(\omega)\sum_{\signal\in\signals_{(\mubar,\ta,\tta)}}\scheme(\omega,\signal)\cdot \left(\utilsender(\omega,a')- \exputilsender{\mu_\signal}{a}\right)\tag{$\left|\exputilsender{\mu_\signal}{a}-\exputilsender{\mubar_\signal}{a}\right|\le\eps$}\\
            =&\sum_{\signal\in\signals_{(\mubar,\ta,\tta)}}\sum_{\omega\in\states}
            \prior(\omega) \scheme(\omega,\signal)\left(\utilsender(\omega,a')-
            \exputilsender{\mu_\signal}{a}\right)=0,\tag{Bayes' rule}
    \end{align*}
    and a similar analysis verifies $\sum_{\omega\in\states}
            \prior(\omega)\varscheme{(\omega,\mubar, \ta, \tta)}\cdot \left(\utilsender(\omega,a')- \exputilsender{\mubar}{a}-\eps\right)\le0$.
    For the third constraint, since $\ta$ is the best response to every $\signal\in\signals_{(\mubar,\ta,\tta)}$, it should also be the best response to the combined signal $(\mubar,\ta,\tta)$. 
    For the fourth constraint, for a tuple $(\mubar,\ta,\tta)\in\approxsignals$ and an action $a$ such that $\exputilsender{\mubar}{a}<\exputilsender{\mubar}{\tta}-2\eps'$, we have
    \begin{align*}
        \forall\signal\in\signals_{(\mubar,\ta,\tta)},\quad
        \exputilsender{\mu_\signal}{a}\le \exputilsender{\mubar}{a}+\eps'
        <\exputilsender{\mubar}{\tta}-\eps'
        \le \exputilsender{\mu_\signal}{\tta}
    \end{align*}
    Together with the choice of $\tta$ in \Cref{eq:tta-def} that $\tta$ minimizes $\exputilsender{\mu_\signal}{a}$ for all $a\in\BR_\delta(\mu_\signal)$, we have $ \exputilreceiver{\mu_\sigma}{a}\le\exputilreceiver{\mu_\sigma}{\tta}$ for every $\sigma\in\signals_{(\mubar,\ta,\tta)}$. Therefore, the fourth constraint is satisfied as
    \begin{align*}
        \sum_{\omega\in\states} \prior(\omega) \varscheme(\omega,\mubar,\ta,\tta)\left(\utilreceiver(\omega,\ta)-\utilreceiver(\omega,a)-\delta\right)=
        \sum_{\signal\in \signals_{(\mubar,\ta,\tta)}}
        \scheme(\signal)\left(\exputilreceiver{\mu_\sigma}{\ta}-
        \exputilreceiver{\mu_\sigma}{a}-\delta\right)\ge0.
    \end{align*}
    Finally, for the LP's objective value, we have
    \begin{align*}
        \OPTobj\ge &\sum_{\omega\in\states}
            \prior(\omega)\sum_{(\mubar,\ta,\tta)\in\approxsignals}
            \varscheme(\omega,\mubar,\ta,\tta)\utilsender(\omega,\tta)\\
            \ge&\sum_{\omega\in\states}
            \prior(\omega)\sum_{(\mubar,\ta,\tta)\in\approxsignals}
            \varscheme(\omega,\mubar,\ta,\tta)\left(\utilsender(\mubar,\tta)-\eps'\right)\tag{the first constraint}\\
            \ge&\sum_{\omega\in\states}
            \prior(\omega)\sum_{\signal\in\signals}
            \varscheme(\omega,\signal)\left(\utilsender(\mubar_\signal,\tta)-\eps'\right)\tag{$\mubar_\signal=\mubar$ for all $\signal\in\signals_{(\mubar,\ta,\tta)}$}\\
            =&\sum_{\signal\in\signals}
            \min_{a\in \BR_\delta(\mu_\sigma)}\sum_{\omega\in\states}
            \prior(\omega)
            \varscheme(\omega,\signal)\utilsender(\mubar_\signal,a)-\eps'\tag{definition of $\tta$ in \cref{eq:tta-def}}\\
            =&\robustutil(\optscheme)-\eps'=\OPT_\delta-\eps'.
    \end{align*}

    We proceed to prove the second part of the theorem, namely the signaling scheme $\approxscheme$ induced by optimal LP solution $\varphi^\star$ satisfies
    $\robustutil_\delta(\approxscheme)\ge\OPTobj-4\eps'$. The key step to establishing this claim is to show that, $\forall(\mubar,\ta,\tta)\in\approxsignals$, the sender's utility under the worst-case $\delta$-BR strategy $\receiverstrategy_\delta((\mubar,\ta,\tta))$ is not much worse than that under action $\tta$. We establish this using proof by contradiction. Suppose an action $a=\receiverstrategy_\delta((\mubar,\ta,\tta))\in\BR_\delta(\mu_{\mubar,\ta,\tta})$ satisfies
        $\exputilsender{\mu_{(\mubar,\ta,\tta)}}{a}<
        \exputilsender{\mu_{(\mubar,\ta,\tta)}}{\tta}-4\eps'$,
    then we should have $\exputilsender{\mubar}{a}< \exputilsender{\mubar}{a}-2\eps'$ because the first and second constraints of the LP in \Cref{fig:lp-qptas} ensures
    \begin{align*}
       \exputilsender{\mubar}{a}\le
       \exputilsender{\mu_{(\mubar,\ta,\tta)}}{a}+\eps'<
        \exputilsender{\mu_{(\mubar,\ta,\tta)}}{\tta}-3\eps'
        \le \exputilsender{\mubar}{\tta}-2\eps'.
    \end{align*}
    However, from the fourth constraint, in the case of   $\exputilsender{\mubar}{a}< \exputilsender{\mubar}{a}-2\eps'$, we have
    \begin{align*}
        &\sum_{\omega\in\states} \prior(\omega) \varscheme(\omega,\mubar,\ta,\tta)\left(\utilreceiver(\omega,\ta)-\utilreceiver(\omega,a)-\delta\right)\ge 0\\
        \iff&
        \exputilreceiver{\mu_{(\mubar,\ta,\tta)}}{a}\le\exputilreceiver{\mu_{(\mubar,\ta,\tta)}}{\tta}-\delta,
    \end{align*}
    which contradicts the assumption of $a\in\BR_\delta(\mu_{(\mubar,\ta,\tta)}) $. Therefore, we must have 
    \begin{align}
        \exputilsender{\mu_{(\mubar,\ta,\tta)}}{a}\ge
        \exputilsender{\mu_{(\mubar,\ta,\tta)}}{\tta}-4\eps'.
        \label{eq:sender-util-closeness}
    \end{align}
    Finally, for the sender's robust utility, we have
    \begin{align*}
        \robustutil_\delta(\approxscheme)=&
        \sum_{\omega\in\states}
            \prior(\omega)\sum_{(\mubar,\ta,\tta)\in\approxsignals}
            \varscheme(\omega,\mubar,\ta,\tta)\cdot
            \utilsender(\omega,\receiverstrategy_\delta((\mubar,\ta,\tta)))\\
            =&\sum_{(\mubar,\ta,\tta)\in\approxsignals}
            \approxscheme((\mubar,\ta,\tta)) \cdot \exputilsender{\mu_{(\mubar,\ta,\tta)}}{\receiverstrategy_\delta((\mubar,\ta,\tta))}\tag{Bayes' rule}\\
            \ge& \sum_{(\mubar,\ta,\tta)\in\approxsignals}
            \approxscheme((\mubar,\ta,\tta)) \cdot \left(\exputilsender{\mu_{(\mubar,\ta,\tta)}}{\tta}-4\eps'\right)
            \tag{From \Cref{eq:sender-util-closeness}}\\
            =&\sum_{\omega\in\states}
            \prior(\omega)\sum_{(\mubar,\ta,\tta)\in\approxsignals}
            \varscheme(\omega,\mubar,\ta,\tta)\cdot
            \utilsender(\omega,\tta)-4\eps'\tag{Bayes' rule}\\
            =&\OPTobj-4\eps'.\qedhere
    \end{align*}
\end{proof}

    \pagebreak
    \bibliographystyle{plainnat}
    \bibliography{ref-robust}
    \appendix
    
\section{Proof of Proposition~\ref{prop:direct-fails}}
\label{app:direct-fails}
\begin{proof}
 In \cref{ex:direct-fails}, the optimal robust utility $\OPT_\delta$ is at least $1-\eps$, which can be achieved by a full-revelation signaling scheme $\optscheme$ supported on $3$ signals. More specifically, $\optscheme$ is a deterministic mapping from $\states$ to a ternary signal space $\signals=\{\signal_{\perp},\signal_0,\signal_1\}$, such that $\scheme(\omega_i)\equiv\signal_i$ for $i\in\{\perp,0,1\}$. 
Since this signaling scheme is deterministic, 
we have $\optscheme(\signal_i)=\prior(\omega_i)$, 
and each posterior distribution $\mu_{\signal_i}$ is the degenerate distribution on $\omega_i$. They lead to the following approximate best response sets:
\[
\BR_\delta(\mu_{\signal_\perp})=\{a_0,a_1\},\quad
\BR_\delta(\mu_{\signal_0})=\{a_0\},\quad
\BR_\delta(\mu_{\signal_1})=\{a_1\}.
\]
Therefore, the sender's robust utility is given by
\begin{align*}
    \robustutil_\delta(\optscheme)=&\sum_{\signal\in\signals}
    \optscheme(\signal)\cdot
    \min_{a\in\BR_\delta(\mu_{\signal})}\exputilsender{\mu_{\signal}}{a}\\
    =&\prior(\omega_\perp)\cdot\exputilsender{\mu_{\signal_\perp}}{a_0}
    +\prior(\omega_0)\cdot\exputilsender{\mu_{\signal_0}}{a_0}
    +\prior(\omega_1)\cdot\exputilsender{\mu_{\signal_1}}{a_1}\\
    =&\eps\cdot0+\frac{1-\eps}{2}\cdot1+\frac{1-\eps}{2}\cdot1=1-\eps.
\end{align*}
On the other hand, we claim that any signaling scheme $\scheme:\states\to\signals$ with $|\signals|=2$ has to suffer a suboptimal utility of $\robustutil_\delta(\scheme)\le\frac{1-\eps}{2}$.
To see this, note that if a posterior distribution $\mu_{\signal}$ of a signal $\signal\in\signals$ is supported on more than one state, then the receiver's utility of the optimal action is at most $\max\{\mu_{\signal}(\omega_0),\mu_{\signal}(\omega_1)\}<1$, which is smaller than 
the robustness level $\delta=1$. Therefore, any posterior $\mu_\signal$ that is a mixed distribution must have $\BR_\delta(\mu_\signal)=\{a_0,a_1\}$, which implies \[
\min_{a\in\BR_\delta(\mu_{\signal})}\exputilsender{\mu_{\signal}}{a}=
\min\left\{
\exputilsender{\mu_\signal}{a_0},\exputilsender{\mu_\signal}{a_1}
\right\}
=\min\left\{
\mu_\signal(\omega_0),\mu_\signal(\omega_1)
\right\}.
\]
Let us denote the two signals in $\signals$ as $\signal_0$ and $\signal_1$. 
At least one of the posterior distributions $\mu_{\signal_0}$ and $\mu_{\signal_1}$  has to be supported on $\ge2$ states because $|\states|=3$.
We have two following cases:
\begin{itemize}
    \item \textbf{Case 1. Both signals have mixed posteriors.}
    In this case, we have
    \begin{align*}
        \robustutil_\delta(\scheme)=&\sum_{\signal\in\signals}
    \scheme(\signal)\cdot
    \min_{a\in\BR_\delta(\mu_{\signal})}\exputilsender{\mu_{\signal}}{a}\\
    =&\scheme(\signal_0)\cdot\min\left\{
    \mu_{\signal_0}(\omega_0),\mu_{\signal_0}(\omega_1)
    \right\} +\scheme(\signal_1)\cdot\min\left\{
    \mu_{\signal_1}(\omega_0),\mu_{\signal_1}(\omega_1)
    \right\}\\
    \le&\scheme(\signal_0)\cdot\mu_{\signal_0}(\omega_0)
    +\scheme(\signal_1)\cdot\mu_{\signal_1}(\omega_0)\\
    =&
    \prior(\omega_0)\cdot\scheme(\omega_0,\signal_0)+\prior(\omega_0)\cdot\scheme(\omega_0,\signal_1)
    =
    \prior(\omega_0)=\frac{1-\eps}{2},
    \end{align*}
    where the last few steps follow from the Bayes' rule.
    \item \textbf{Case 2. Only one signal has a mixed posterior.}
    WLOG, assume $\mu_{\signal_0}$ is a mixed distribution whereas $\mu_{\signal_1}$ is a pure (degenerate) distribution. It is also WLOG to assume that $\mu_{\signal_1}$ is a degenerate distribution on $\omega_1$, because $\omega_0$ and $\omega_1$ are symmetric and $\omega_\perp$ has $0$ utility under both actions. Therefore, we have that $\signal_1$ is only sent on state $\omega_1$, and $\scheme(\signal_1)=\prior(\omega_1)$. As for the robust utility, we have
    \begin{align*}
        \robustutil_\delta(\scheme)=&\sum_{\signal\in\signals}
    \scheme(\signal)\cdot
    \min_{a\in\BR_\delta(\mu_{\signal})}\exputilsender{\mu_{\signal}}{a}\\
    =&\scheme(\signal_0)\cdot\min\left\{
    \mu_{\signal_0}(\omega_0),\mu_{\signal_0}(\omega_1)
    \right\} +\scheme(\signal_1)\cdot\exputilsender{\omega_{\signal_1}}{a_1}\\
    =&\scheme(\signal_0)\cdot0
    +\scheme(\signal_1)\cdot1\\
    =&\prior(\omega_1)=\frac{1-\eps}{2},
    \end{align*}
    where the second-to-last step follows from $\mu_{\signal_0}(\omega_1)=0$ because $\signal_1$ is only sent on state $\omega_1$.
\end{itemize}
Combining the two cases above, we arrive at the conclusion that any scheme with two signals has robust utility at most $\frac{1-\eps}{2}$, which is half of the optimal robust utility $\OPT_\delta$. In addition, we note that $\OPT_\delta-\robustutil_\delta(\scheme)=\frac{1-\eps}{2}$ and the above analysis holds for any $\eps>0$. Therefore, sending $\eps\to0$ proves the $\frac{1}{2}$ additive gap.
\end{proof}

\section{Proof of Proposition~\ref{thm:small-action-space}}
\label{app:small-action-space}
\begin{proof}
    We use $\OPTobj$ to denote the optimal objective value of \cref{fig:lp-robust}, and use $\varscheme^\star,\varx^\star$ to denote the optimal variables that achieve $\OPTobj$.
    In addition, let $\optscheme:\states\to\signals$ be a signaling scheme induced by the optimal LP variables $\varscheme^\star(\omega,A,\ta)$.   
    Since \Cref{fig:lp-robust} is a relaxation of the original problem for computing $\OPT_\delta$, we have $\OPTobj\ge\OPT_\delta$. 
    Therefore, to establish the optimality of $\optscheme$, it suffices to show that $\OPTobj\le\robustutil_\delta(\optscheme)$, because it implies\[
        \OPTobj\le\robustutil_\delta(\optscheme)\le\max_{\scheme}\robustutil_\delta(\scheme)=\OPT_\delta\le\OPTobj,
    \]
    in which all the inequalities must be tight.

    Now we prove $\OPTobj\le\robustutil_\delta(\optscheme)$.
    Since the third constraint in \Cref{fig:lp-robust}  ensures that all actions excluded form $A$ must have suboptimality gap $\ge\delta$, $A$ is an overestimate of the true $\delta$-BR set, i.e., $A\subseteq \BR_\delta(\mu_{(A,\ta)})$. 
    Therefore, we have
    \begin{align*}
        \robustutil_\delta(\optscheme)
        =&\sum_{(A,\ta)\in\signals}\optscheme(A,\ta)\min_{a\in \BR_\delta(\mu_{(A,\ta)})}\exputilsender{\mu_{(A,\ta)}}{a}\\
        =&\sum_{(A,\ta)\in\signals}\min_{a\in \BR_\delta(\mu_{(A,\ta)})}
        \sum_{\omega\in\states}
        \prior(\omega)\optscheme(\omega,A,\ta)\exputilsender{\mu_{(A,\ta)}}{a}\\
        \ge&\sum_{(A,\ta)\in\signals}\min_{a\in A}
        \sum_{\omega\in\states}
        \prior(\omega)\optscheme(\omega,A,\ta)\exputilsender{\mu_{(A,\ta)}}{a}\tag{$A\subseteq \BR_\delta(\mu_{(A,\ta)})$}\\
        \ge&\sum_{(A,\ta)\in\signals}\varx^\star(A,\ta)
        \tag{first constraint in \Cref{fig:lp-robust}}\\
        =&\OPTobj,\tag{optimality of $\varx^\star(A,\ta)$ in \cref{fig:lp-robust}}
    \end{align*}
which completes the proof.
\end{proof}

\section{Omitted Proofs from Section~\ref{sec:small-state-space}}
\subsection{Proof of Lemma~\ref{lemma:connectivity}}
\label{app:proof-connectivity}
\begin{proof}
    We establish this theorem by showing that any two adjacent polytopes $\Delta_{(A_1,\ta_1)},\Delta_{(A_2,\ta_2)}$ in the $\numstates$-dimensional simplex corresponds to neighboring vertices $(A_1,\ta_1)$ and $(A_2,\ta_2)$ in the symmetric difference graph. Since polytopes $\Delta_{(A,\ta)}$ form a partition of the convex simplex $\Delta(\states)$, any path through the simplex's polytopes translates into a connected path in the symmetric difference graph $G$, therefore establishing the connectivity of the subgraph induced by $\feasibletuples$.

    If $\Delta_{(A_1,\ta_1)},\Delta_{(A_2,\ta_2)}$ are adjacent, they must share a common face, which belongs to a hyperplane in $\cH$. Consider the following two cases: 

    \textbf{Case 1.} If the hyperplane is $\sum_{\omega\in\states}
    \varmu(\omega)(\utilreceiver(\omega,a)-\utilreceiver(\omega,a'))=0$ for some pair of actions $(a,a')$. According to the semantics in \cref{eq:polytope-def} that defines each polytope, it must be the case that $a,a'$ each corresponds to an action in $\ta_1,\ta_2$. WLOG assume $a=\ta_1$ and $a'=\ta_2$. Therefore, to show that $\{(A_1,\ta_1),(A_2,\ta_2)\}\in E$, it suffices to establish $A_1=A_2$.

    Consider two posteriors $\mu_1\in \Delta_{(A_1,\ta_1)}$ and $\mu_2\in \Delta_{(A_2,\ta_2)}$ such that $\|\mu_1-\mu_2\|_1\le\eps$ for an infinitesimal $\eps>0$, and both $\mu_1,\mu_2$ have a distance of at least $\eps_0$ from all other hyperplanes in $\cH$. Such a choice implies
    \begin{itemize}
        \item $\forall a\in A,\ \left|\exputilreceiver{\mu_1}{a}-\exputilreceiver{\mu_2}{a}\right|\le \|\mu_1-\mu_2\|_1=\eps$.
        \item $\exputilreceiver{\mu_1}{\ta_1}-\min_{a\in A_1}\exputilreceiver{\mu_1}{a}\le\delta-\eps_0$, and $\exputilreceiver{\mu_2}{\ta_2}-\min_{a\in A_2}\exputilreceiver{\mu_2}{a}\le\delta-\eps_0$.
    \end{itemize}
    Consider the process of fixing $\eps_0$ and sending $\eps\to0$. 
    For any $ a\in A_1=\BR_\delta(\mu_1)$, we have
    \begin{align*}
        \exputilreceiver{\mu_2}{a}\ge 
        \exputilreceiver{\mu_1}{a}-\eps
        \ge \exputilreceiver{\mu_1}{\ta_1}-(\delta-\eps_0)-\eps
        \ge \exputilreceiver{\mu_2}{\ta_1}-\delta +\eps_0-2\eps>\exputilreceiver{\mu_2}{\ta_1}-\delta,
    \end{align*}      
    which implies $a\in\BR_\delta(\mu_2)=A_2$ and therefore $A_1\subseteq A_2$. In addition, a symmetric argument implies that $A_2\subseteq A_1$. Therefore, it must be the case that $A_1=A_2$.

    \textbf{Case 2.}  If the hyperplane is $\sum_{\omega\in\states}
    \varmu(\omega)(\utilreceiver(\omega,a)-\utilreceiver(\omega,a'))=\delta$ for some pair of actions $(a,a')$. 
    Once again, we can find two posteriors $\mu_1\in \Delta_{(A_1,\ta_1)}$ and $\mu_2\in \Delta_{(A_2,\ta_2)}$ such that $\|\mu_1-\mu_2\|_1\le\eps$ for an infinitesimal $\eps>0$, and both $\mu_1,\mu_2$ have a distance of at least $\eps_0$ from all other hyperplanes in $\cH$. This immediately implies $\ta_1=\ta_2$ because otherwise the hyperplane $\sum_{\omega\in\states}
    \varmu(\omega)(\utilreceiver(\omega,a_1)-\utilreceiver(\omega,a_2))=0$ will have to separate $\mu_1$ and $\mu_2$.
    A similar continuity argument to that in the previous case shows that $a_1=a_2=a$ and $(A_1\setminus A_2)\cup(A_2\setminus A_1)=\{a'\}$.

    Therefore, in both cases we can conclude that $\{(A_1,\ta_1),(A_2,\ta_2)\}\in E$. This finishes the proof of \Cref{lemma:connectivity}.
\end{proof}

\subsection{Proof of Lemma~\ref{lemma:feasible-num-upper-bound}}
\label{app:feasible-subset-bounded}
\begin{proof}
    The first bound follows from $|\feasibletuples|\le |\signals|\le \numactions 2^{\numactions-1}$. For the second bound,
    consider the following set of hyperplanes defined by every pair of receiver's actions:\[
        \mathcal{H}=\bigcup_{a,a'\in\actions}\left\{
            \sum_{\omega\in\states}
        \varmu(\omega)(\utilreceiver(\omega,a)-\utilreceiver(\omega,a'))=0,\ 
        \sum_{\omega\in\states}
        \varmu(\omega)(\utilreceiver(\omega,a)-\utilreceiver(\omega,a'))=\delta\ 
        \right\}
    \] They cut the $(\numstates-1)$-dimensional simplex into at most $$\binom{|\cH|}{\le \numstates}\le 
    \binom{2\numactions^2}{\le \numstates}\le 
        \numactions^{O(\numstates)}$$ cells~\citep[Theorem 28.1.1]{halperin2017arrangements}. Since the boundaries of each open polytope $\Delta_{(A,\ta)}$ are defined by a subset of hyperplanes in $\cH$ with either strict or non-strict inequalities, each polytopes remain undivided by any hyperplane. Therefore, we have
        $|\feasibletuples|\le \numactions^{O(\numstates)}.$ 
\end{proof}

\subsection{LP for checking the feasibility of state-action tuples}
\label{app:lp-feasibility}
\begin{figure}[!htbp]
    \centering
    \noindent \fbox{
    \begin{minipage}[t][\height][c]{\dimexpr\textwidth-2\fboxsep-2\fboxrule\relax}
        \centering
        \begin{align*}
            \Maximize_{\vareps,\varmu}\quad&\vareps\\
            \subjectto\quad &
            \sum_{\omega\in\states} \varmu(\omega) \left(\utilreceiver(\omega,\ta)-\utilreceiver(\omega,a)\right)\ge0, &\forall a\in A\\
        &\sum_{\omega\in\states} \varmu(\omega) \left(\utilreceiver(\omega,\ta)-\utilreceiver(\omega,a)\right)\le\delta-\vareps, &\forall a\in A\\
        & \sum_{\omega\in\states} \varmu(\omega) \left(\utilreceiver(\omega,\ta)-\utilreceiver(\omega,a')\right)\ge\delta, &\forall a'\in \actions\setminus A\\
        & \sum_{\omega\in\states}\varmu(\omega)=1\\
        & \varmu(\omega)\ge0&\forall\omega\in\states.
        \end{align*}
        \end{minipage}
    }
    \caption{LP for checking the feasibility of $(A,\ta)\in\signals$. The tuple is feasible if and only if $\vareps^\star>0$.}
    \label{fig:lp-check-feasibility}
\end{figure}

\end{document}